\newlength{\tmpxa}
\newlength{\tmpxb}
\newlength{\tmpxc}
\newlength{\tmpya}
\newlength{\tmpyb}
\newlength{\tmpyc}
\newcommand{\mypssources}{\psset{%
    radius=1.5mm,
    labelsep=3mm}}
\newcommand{\mypstargets}{\psset{%
    framesize=3mm,
    labelsep=3mm}}
\newcommand{\mypslinks}{\psset{%
    npos=0.65,
    labelsep=1pt,
    border=1pt}}
\newcommand{\mypsiedges}{\psset{%
    arcangle=0.01,
    linewidth=1.5pt,
    labelsep=2pt}}
\newcommand{\mypskedges}{\psset{%
    arcangle=0.01,
    linewidth=1.5pt,
    linestyle=dashed,
    dash=3pt 3pt,
    labelsep=2pt}}
\newcommand{\mypshyperedges}{\psset{%
    linearc=0.05,
    cornersize=absolute}}
\newcommand{\mypsihyperedges}{\mypshyperedges\psset{%
    linewidth=0.5pt}}
\newcommand{\mypskhyperedges}{\mypshyperedges\psset{%
    linewidth=0.5pt,
    linestyle=dashed,
    dash=2pt 2pt}}
\newcommand{\mypshypernodes}{\psset{%
    radius=2.5pt,
    labelsep=4pt}}
\newcommand{\mypsnodes}{\psset{%
    radius=2.5pt,
    labelsep=6pt}}
\newcommand{\myframe}[8]{%
    \pssetlength{\tmpxa}{#1}
    \pssetlength{\tmpya}{#2}
    \pssetlength{\tmpxb}{#3}
    \pssetlength{\tmpyb}{#4}
    \psaddtolength{\tmpxa}{-#5}
    \psaddtolength{\tmpya}{-#6}
    \psaddtolength{\tmpxb}{#7}
    \psaddtolength{\tmpyb}{#8}
    \psframe(\tmpxa,\tmpya)(\tmpxb,\tmpyb)
}%
\newcommand{\myframeL}[8]{%
    \pssetlength{\tmpxa}{#1}
    \pssetlength{\tmpya}{#2}
    \pssetlength{\tmpxb}{#1}
    \pssetlength{\tmpyb}{#2}
    \pssetlength{\tmpxc}{#3}
    \pssetlength{\tmpyc}{#4}
    \psaddtolength{\tmpxa}{-#5}
    \psaddtolength{\tmpya}{-#6}
    \psaddtolength{\tmpxb}{#7}
    \psaddtolength{\tmpyb}{#8}
    \psaddtolength{\tmpxc}{#7}
    \psaddtolength{\tmpyc}{#8}
    \pspolygon(\tmpxa,\tmpya)(\tmpxa,\tmpyc)(\tmpxb,\tmpyc)(\tmpxb,\tmpyb)(\tmpxc,\tmpyb)(\tmpxc,\tmpya)
}%
\newcommand{\myframeb}[6]{%
    \myframe{#1}{#2}{#3}{#4}{#5}{#5}{#6}{#6}%
}
\newcommand{\myframebL}[6]{%
    \myframeL{#1}{#2}{#3}{#4}{#5}{#5}{#6}{#6}%
}
\newcommand{\mysize}[1]{{\lvert #1 \rvert}}
\newcommand{\mybound}[2]{\Delta^{#1}_{#2}}
\newcommand{\myutil}{\omega}
\newcommand{\myoptutil}{\omega^{*}}
\newcommand{\myoptx}{x^{*}}
\newcommand{\myE}{\mathcal{E}}
\newcommand{\myG}{\mathcal{G}}
\newcommand{\myH}{\mathcal{H}}
\newcommand{\myKdash}{$K\mspace{-2.5mu}$-}
\newcommand{\myIdash}{$I\mspace{-1.5mu}$-}
\newcommand{\myXdash}{$X\mspace{-2.5mu}$-}
\newcommand{\myYdash}{$Y\mspace{-4mu}$-}
\newcommand{\Kadjacent}{\mbox{\myKdash adjacent}}
\newcommand{\Iadjacent}{\mbox{\myIdash adjacent}}
\newcommand{\Klength}{\mbox{\myKdash length}}
\newcommand{\Iedge}{\mbox{\myIdash edge}}
\newcommand{\Kedge}{\mbox{\myKdash edge}}
\newcommand{\Iedges}{\mbox{\myIdash edges}}
\newcommand{\Kedges}{\mbox{\myKdash edges}}
\newcommand{\Xedge}{\mbox{\myXdash edge}}
\newcommand{\Yedge}{\mbox{\myYdash edge}}
\newcommand{\Khop}{\mbox{\myKdash hop}}
\newcommand{\myparty}[2]{\overset{{\scriptstyle \text{($k_{#1}$)}}}{\rule{0pt}{1.8ex}#2}}
\newcommand{\myresource}[1]{\qquad{\scriptstyle \text{\raisebox{1pt}{($i_{#1}$)}}}}
\newcommand{\mylist}[3]{\multido{\i=1+1}{#1}{%
    \ifthenelse{\equal{\i}{1}}{}{#2}%
    #3_{\i}%
}}
\newcommand{\mypartylist}[2]{\mylist{#1}{#2}{k}}
\newcommand{\myresourcelist}[2]{\mylist{#1}{#2}{i}}
\newtheorem{theorem}{Theorem}
\newtheorem{lemma}[theorem]{Lemma}
\newtheorem{corollary}[theorem]{Corollary}
\newtheorem{definition}{Definition}
\newtheorem{example}{Example}
\begin{document}

\title{Local Approximation Algorithms for a Class of \\ 0/1 Max-Min Linear Programs}

\author{Patrik~Floréen, Marja~Hassinen, Petteri~Kaski, and Jukka~Suomela \\
    Helsinki Institute for Information Technology HIIT, \\
    Department of Computer Science, University of Helsinki, \\
    P.O. Box 68, FI-00014 University of Helsinki, Finland\\
    $\{$firstname.lastname$\}$@cs.helsinki.fi
}

\maketitle

\begin{abstract}
    We study the applicability of distributed, local algorithms to 0/1 max-min LPs where the objective is to maximise ${\min_k \sum_v c_{kv} x_v}$ subject to ${\sum_v a_{iv} x_v \le 1}$ for each $i$ and ${x_v \ge 0}$ for each $v$. Here $c_{kv} \in \{0,1\}$, $a_{iv} \in \{0,1\}$, and the support sets ${V_i = \{ v : a_{iv} > 0 \}}$ and ${V_k = \{ v : c_{kv}>0 \}}$ have bounded size; in particular, we study the case $|V_k| \le 2$. Each agent $v$ is responsible for choosing the value of $x_v$ based on information within its constant-size neighbourhood; the communication network is the hypergraph where the sets $V_k$ and $V_i$ constitute the hyperedges. We present a local approximation algorithm which achieves an approximation ratio arbitrarily close to the theoretical lower bound presented in prior work.
\end{abstract}

\section{Introduction}

To motivate the problem setting studied in this paper, consider the toy network depicted in Fig.~\ref{fig:isp}. There are seven customers, $k_1,k_2,\ldots,k_7$, who are served by five access points, $i_1,i_2,\ldots,i_5$. The customers and access points are connected by the 14 numbered links.

\begin{figure}[h]
    \centering
    \begin{pspicture}(0.5,0.5)(4.3,4.6)%
        {%
            \mypssources
            \Cnode(1.75, 3.8){k1}
            \Cnode(1.75, 2.75){k2}
            \Cnode(1.2, 1.8){k3}
            \Cnode(2.5, 1.8){k4}
            \Cnode(3.25, 3.8){k5}
            \Cnode(3.25, 2.75){k6}
            \Cnode(3.8, 1.8){k7}
            \uput{0.15}[135](k1){$k_1$}
            \uput{0.15}[d](k2){$k_2$}
            \uput{0.1}[225](k3){$k_3$}
            \uput{0.15}[r](k4){$k_4$}
            \uput{0.15}[45](k5){$k_5$}
            \uput{0.15}[d](k6){$k_6$}
            \uput{0.15}[315](k7){$k_7$}
        }{%
            \mypstargets
            \fnode(1.0, 3.1){i1}
            \fnode(2.5, 4.1){i2}
            \fnode(2.5, 2.6){i3}
            \fnode(2.5, 1.0){i4}
            \fnode(4.0, 3.1){i5}
            \uput[l](i1){$i_1$}
            \uput[u](i2){$i_2$}
            \uput[u](i3){$i_3$}
            \uput[d](i4){$i_4$}
            \uput[r](i5){$i_5$}
        }{%
            \mypslinks
            \newcommand{\mync}[4]{%
                \ncarc[arcangle=#4]{->}{#1}{#2}
                \naput{$\Myagents(#3)$}
            }
            \ncarc[arcangle=-8]{->}{k1}{i1}\nbput[npos=0.5]{$1$}
            \ncarc[arcangle=8]{->}{k1}{i2}\naput[npos=0.5]{$2$}
            \ncarc[arcangle=8]{->}{k2}{i1}\nbput[npos=0.5]{$3$}
            \ncarc[arcangle=-8]{->}{k2}{i3}\naput[npos=0.4]{$4$}
            \ncarc[arcangle=8]{->}{k3}{i1}\naput[npos=0.5]{$5$}
            \ncarc[arcangle=-8]{->}{k3}{i4}\nbput[npos=0.5]{$6$}
            \ncarc[arcangle=0]{->}{k4}{i3}\naput[npos=0.5]{$7$}
            \ncarc[arcangle=0]{->}{k4}{i4}\nbput[npos=0.5]{$8$}
            \ncarc[arcangle=-8]{->}{k5}{i2}\nbput[npos=0.5]{$9$}
            \ncarc[arcangle=8]{->}{k5}{i5}\naput[npos=0.5]{$10$}
            \ncarc[arcangle=8]{->}{k6}{i3}\nbput[npos=0.4]{$11$}
            \ncarc[arcangle=-8]{->}{k6}{i5}\naput[npos=0.5]{$12$}
            \ncarc[arcangle=8]{->}{k7}{i4}\naput[npos=0.5]{$13$}
            \ncarc[arcangle=-8]{->}{k7}{i5}\nbput[npos=0.5]{$14$}
        }
    \end{pspicture}
    \caption{An example of a data communication network.}\label{fig:isp}
\end{figure}

Now, suppose that we want to provide a maximum fair share of bandwidth to each customer, subject to the constraint that each access point can handle at most $1$~unit of network traffic. Put otherwise, we want to maximise the minimum bandwidth available to a customer.

In formally precise terms, we want to solve the following optimisation problem, where the variables $x_1,x_2,\ldots,x_{14}$ determine the amount of network traffic allocated to each link:
\begin{equation}
    \begin{alignedat}{2}
        &\text{maximise } \ && \myutil = \min \{\myparty{1}{x_1+x_2}, \myparty{2}{x_3+x_4}, \dotsc, \myparty{7}{x_{13}+x_{14}} \} \\
        &\text{subject to } \ && x_{1\phantom{1}} + x_{3\phantom{1}} + x_{5\phantom{1}} \le 1, \myresource{1} \\
        &&& x_{2\phantom{1}} + x_{9\phantom{1}} \phantom{{} + x_{00}}  \le 1, \myresource{2} \\
        &&& x_{4\phantom{1}} + x_{7\phantom{1}} + x_{11} \le 1, \myresource{3} \\
        &&& x_{6\phantom{1}} + x_{8\phantom{1}} + x_{13} \le 1, \myresource{4} \\
        &&& x_{10} + x_{12} + x_{14} \le 1, \myresource{5} \\
        &&& x_{1}, x_{2}, \dotsc, x_{14} \ge 0 .
    \end{alignedat}\label{eq:isp}
\end{equation}
An optimal solution is
$x_1 = x_7 = 2/7$,
$x_2 = x_8 = 3/7$,
$x_3 = x_6 = x_{11} = 0$,
$x_4 = x_5 = x_{12} = 5/7$,
$x_9 = x_{13} = 4/7$, and
$x_{10} = x_{14} = 1/7$,
guaranteeing the bandwidth $\myutil = 5/7$ to each customer. This is the best possible fair bandwidth allocation for our toy network. Moreover, it can be argued that such an allocation is not completely trivial to find with heuristic techniques, even in the toy network.

So far so good, but of course no one would seriously suggest a similar approach for optimising a real-world network. For one, any realistic network is several orders of magnitude larger, and, what is more, \emph{under constant change}. In particular, it is not feasible to put together a snapshot of the relevant topology of the entire network, such as Fig.~\ref{fig:isp}, for purposes of optimisation.

Nevertheless, a disciplined global optimisation approach, such as~\eqref{eq:isp}, provides an unequivocal benchmark for the design of distributed algorithms. Ideally, after each change in topology, the entire network should immediately converge to a global optimum. Of course, this ideal is unattainable if the nodes are only aware of their local neighbourhoods in the network. But not completely so: in certain cases local information does suffice to \emph{provably approximate} the global optimum.

In this work we present a novel distributed algorithm for linear max--min optimisation problems such as~\eqref{eq:isp}. The algorithm is both an approximation algorithm, with a provable approximation guarantee, and a local algorithm, with a constant local horizon~$r$ which is independent of the size of the network (see Section~\ref{ssec:local}). In practical terms, this implies all of the following.
\begin{itemize}
    \item The algorithm converges in $r$ time units and recovers from a topology change in $r$ time units.
    \item Whenever the network -- or any part of it -- has remained stable for $r$ time units, the algorithm provides a provable approximation guarantee for that part.
    \item A topology change only affects those parts of the network that are within $r$ hops from a node that loses or gains neighbours. The rest of the network stays in its current configuration, which is feasible and approximately optimal both before and after the topology change.
\end{itemize}

\subsection{Local Algorithms}\label{ssec:local}

We say that a distributed algorithm has the \emph{local horizon} $r$ if a topology change at node $v$ affects only those network nodes which are within $r$ hops from node $v$. In other words, the output of node $u$ is a function of input available in its radius $r$ neighbourhood.

Distributed algorithms where the local horizon $r$ is constant are called \emph{local algorithms} or distributed constant-time algorithms. Naturally the local setting is very restrictive; there are fundamental obstacles which prevent us from solving problems by using a local algorithm~\cite{linial92locality,kuhn04what}. However, a few positive results are known~\cite{papadimitriou93linear,naor95what,kuhn05price,kuhn05locality,kuhn06fault-tolerant,kuhn06price,kuhn05constant-time,floreen08approximating}. Our work presents a new example of such positive results.

If we assume that some auxiliary information -- such as the coordinates of the nodes -- is available, we can design local algorithms for a wider range of problems~\cite{czyzowicz08local,floreen07local,urrutia07local}. In the present work no such assumptions are necessary.

\subsection{Max-Min Packing Problem}

Formally, the problem setting that we study is a 0/1-version of the \emph{max-min packing problem}~\cite{floreen08approximating}, defined as follows. Let $V$, $I$ and $K$ be disjoint index sets; we say that each $v \in V$ is an \emph{agent}, each $k \in K$ is a \emph{beneficiary party}, and each $i \in I$ is a \emph{resource}. We assume that one unit of activity by $v$ benefits the party~$k$ by $c_{kv} \in \{0,1\}$ units and consumes $a_{iv} \in \{0,1\}$ units of the resource~$i$; the objective is to set the activities to provide a fair share of benefit for each party. Assuming that the activity of agent $v$ is $x_v$ units, the objective is to
\begin{equation}
    \begin{aligned}
        &\text{maximise } & \myutil = \min_{k \in K} \sum_{v \in V} c_{kv} x_v & \\
        &\text{subject to } & \sum_{v \in V} a_{iv} x_v &\le 1 & \quad\forall &i \in I, \\
        && x_v &\ge 0 & \forall &v \in V .
    \end{aligned}\label{eq:max-min}
\end{equation}
We assume that the support sets defined for all $i\in I$, $k\in K$, and $v\in V$ by
\begin{align*}
    V_i &= \{v\in V:a_{iv}>0\}, \\
    V_k &= \{v\in V:c_{kv}>0\}, \\
    I_v &= \{ i \in I : a_{iv}>0\}, \\
    K_v &= \{ k \in K : c_{kv}>0 \}
\end{align*}
have bounded size. That is, we focus on instances of~\eqref{eq:max-min} such that ${\mysize{I_v} \le \mybound IV}$, ${\mysize{K_v} \le \mybound KV}$, ${\mysize{V_i} \le \mybound VI}$ and ${\mysize{V_k} \le \mybound VK}$ for some constants $\mybound IV$, $\mybound KV$, $\mybound VI$ and $\mybound VK$. To avoid uninteresting degenerate cases, we assume that $I_v$, $V_i$ and $V_k$ are nonempty.

\begin{example}
    The problem instance~\eqref{eq:isp} is of the form~\eqref{eq:max-min}. There is one agent for each link. Customers $k_1, k_2, \dotsc, k_7$ are beneficiary parties and access points $i_1, i_2, \dotsc, i_5$ are resources. We have ${\mybound VK = 2}$ and ${\mybound VI = 3}$.
\end{example}

\subsection{Local, Distributed Setting}

The model of distributed computation assumed in this work is as follows. Each agent $v\in V$ is an independent computational entity; all agents execute the same deterministic algorithm. Agent $v$ controls the associated variable~$x_v$.

The communication between the agents is constrained by the \emph{communication graph}, a hypergraph $\myH = (V, \myE)$, where the vertices $V$ are the agents and the hyperedges are defined by $\myE = {\{ V_i : i \in I \}} \, \cup \, {\{ V_k : k \in K \}}$. Two agents can communicate directly with each other if they are adjacent in $\myH$. Let $d_\myH(u,v)$ be the shortest-path distance (number of hyperedges, hop count) between $u \in V$ and $v \in V$ in $\myH$, and let
\[
    B_\myH(v, r) = \{u\in V:d_\myH(u,v)\leq r\},\quad  r=1,2,\ldots,
\]
be the set of nodes within distance at most $r$ from node $v$ in hypergraph~$\myH$. 

Each agent $v\in V$ has the following local input: the identifier of $v$, the hyperedges $V_i$ for which $v \in V_i$, and the hyperedges $V_k$ for which $v \in V_k$. The hyperedges are given as sets of identifiers.

The algorithm executed by the agents has the \emph{local horizon} $r$ if, for every agent $v\in V$, the value set to $x_v$ is a function of the local input of the agents in $B_\myH(v, r)$. 

Thus, each agent $v$ executing an algorithm with a local horizon $r$ is completely oblivious to the network beyond $B_\myH(v,r+1)$. In particular, two distinct agents $u,v\in V$ may have the same identifier if $d_\myH(u,v)\geq 2r+3$. Thus, without loss of generality we may assume that the local input of each agent has a size (in bits) that depends only on $r$, $\mybound IV$, $\mybound KV$, $\mybound VI$, and $\mybound VK$, but not on the size of the network.

\begin{figure}[b]
    \centering
    \begin{pspicture}(0.5,0.7)(6.3,4.0)%
        {%
            \mypsihyperedges
            \myframeb{1.0}{1.0}{1.0}{3.5}{0.20}{0.40}
            \myframeb{2.2}{3.5}{4.5}{3.5}{0.20}{0.40}
            \myframeb{2.2}{2.5}{4.5}{2.5}{0.20}{0.40}
            \myframeb{2.2}{1.0}{4.5}{1.0}{0.20}{0.40}
            \myframeb{5.8}{1.0}{5.8}{3.5}{0.20}{0.40}
            \rput(1.13,1.85){$i_1$}
            \rput(2.9,3.6){$i_2$}
            \rput(2.9,2.6){$i_3$}
            \rput(2.9,1.1){$i_4$}
            \rput(5.93,1.85){$i_5$}
        }{%
            \mypskhyperedges
            \myframe{1.0}{3.5}{2.1}{3.5}{0.30}{0.30}{0.45}{0.50}
            \myframe{1.0}{2.5}{2.1}{2.5}{0.30}{0.30}{0.45}{0.50}
            \myframe{1.0}{1.0}{2.1}{1.0}{0.30}{0.30}{0.45}{0.50}
            \myframeb{3.5}{1.0}{3.5}{2.5}{0.30}{0.50}
            \myframeb{4.5}{3.5}{5.8}{3.5}{0.30}{0.50}
            \myframeb{4.5}{2.5}{5.8}{2.5}{0.30}{0.50}
            \myframeb{4.5}{1.0}{5.8}{1.0}{0.30}{0.50}
            \rput(1.75,3.6){$k_1$}
            \rput(1.75,2.6){$k_2$}
            \rput(1.75,1.1){$k_3$}
            \rput(3.6,1.85){$k_4$}
            \rput(5.25,3.6){$k_5$}
            \rput(5.25,2.6){$k_6$}
            \rput(5.25,1.1){$k_7$}
        }{%
            \mypshypernodes
            \newcommand{\mynode}[3]{%
                \Cnode*(#1,#2){v#3}
                \uput[ur](v#3){$#3$}
            }%
            \mynode{1}{3.5}{1}
            \mynode{1}{2.5}{3}
            \mynode{1}{1}{5}
            \mynode{2.2}{3.5}{2}
            \mynode{2.2}{2.5}{4}
            \mynode{2.2}{1}{6}
            \mynode{3.5}{2.5}{7}
            \mynode{3.5}{1}{8}
            \mynode{4.5}{3.5}{9}
            \mynode{4.5}{2.5}{11}
            \mynode{4.5}{1}{13}
            \mynode{5.8}{3.5}{10}
            \mynode{5.8}{2.5}{12}
            \mynode{5.8}{1}{14}
        }
    \end{pspicture}
    \caption{Hypergraph $\myH$ for instance~\eqref{eq:isp}.}\label{fig:isp-hyper}
\end{figure}

\begin{example}
    Fig.~\ref{fig:isp-hyper} shows the hypergraph $\myH$ for the problem instance~\eqref{eq:isp}. The structure of $\myH$ closely reflects the structure of the network shown in Fig.~\ref{fig:isp}: two agents are able to communicate with each other if they share the same access point or the same customer.
\end{example}

\subsection{Prior Work and Contributions}

This paper contributes to work in progress aimed at a complete characterisation of the local approximability of the max-min packing problem. Here we provide the answer for the case of 0/1 coefficients and $\mybound VK = 2$:

\begin{theorem}\label{thm:upper-bound}
    Let $\mybound VI \ge 2$ be given. For any $\epsilon > 0$, there is a local algorithm for 0/1 max-min packing problem~\eqref{eq:max-min} with the approximation ratio $\mybound VI/2 + \epsilon$, assuming $\mybound VK = 2$.
\end{theorem}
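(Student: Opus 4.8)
\emph{Proof sketch.} I would first recast the instance combinatorially. An agent $v$ with $K_v=\emptyset$ benefits no party, so we may fix $x_v=0$ and delete it; this is visible in the local input and only relaxes resource constraints, so assume $K_v\neq\emptyset$ for every $v$. Since $\mybound VK=2$, the sets $\{V_k:k\in K\}$ are edges (and loops, for singleton parties) on $V$, and a feasible $x\ge 0$ is exactly a weighting with $x_u+x_v\ge\myutil$ on each edge $\{u,v\}$, $x_v\ge\myutil$ on each loop, and $\sum_{v\in V_i}x_v\le 1$ for each resource~$i$. A scaling argument then shows $1/\myoptutil=\min_y\max_{i\in I}\sum_{v\in V_i}y_v$, the minimum taken over all fractional vertex covers $y\ge 0$ of this graph, so the goal is a locally computable feasible $x$ of value at least $\myoptutil/(\mybound VI/2+\epsilon)$; the plan is to obtain it by a bounded number of local balancing rounds applied to a trivial baseline.

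\textbf{Baseline and source of difficulty.} The $0$-local rule $x_v=1/\max_{i\in I_v}\mysize{V_i}$ is feasible and already attains $\myutil\ge 2/\mybound VI$ (each edge has two endpoints; a loop is worse, but then $\myoptutil\le 1$ anyway). However $\myoptutil$ can exceed $2/\mybound VI$ by almost the factor $\mybound VI/2$, and I would check that this happens only when congestion is ``nested'': an agent in a crowded resource can offload party demand onto a party-neighbour whose resources are light, which can offload further, and so on. An optimal solution exploits arbitrarily deep nesting, whereas a radius-$r$ algorithm can follow the chain only $r$ steps --- which is exactly where the additive $\epsilon$ and the dependence $r=r(\epsilon,\mybound VI)$ enter.

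\textbf{Algorithm.} Fix $t=t(\epsilon,\mybound VI)$ and set $r=t$. Maintain, for each party $k$ and each $v\in V_k$, a share $s_{v,k}\ge 0$ with $\sum_{v\in V_k}s_{v,k}\ge 1$, initially uniform, together with the rule $x_v=\beta_v\max_{k\in K_v}s_{v,k}$, where $\beta_v$ is the largest scalar satisfying $\sum_{u\in V_i}x_u\le 1$ for every $i\in I_v$; thus feasibility is an invariant. In one round, each agent lying in a currently heavily loaded resource lowers its shares while its party-neighbours raise theirs by matching amounts, preserving $\sum_{v\in V_k}s_{v,k}\ge 1$; iterate $t$ rounds and output $x$. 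A round depends only on the radius-$1$ neighbourhood, so after $t$ rounds $x_v$ depends only on $B_\myH(v,t)=B_\myH(v,r)$, making the algorithm local with horizon~$r$.

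\textbf{Approximation --- the main obstacle.} Here lies the real work. Via the reformulation I would compare the vertex cover implicit in the output with an optimal one, layer by layer outward from an arbitrary resource. Two structural facts drive the bound: (i) every party has exactly two endpoints, so one unit of party demand can be split half-and-half, and summing over a resource of at most $\mybound VI$ agents already accounts for the factor $\mybound VI/2$; and (ii) the optimum can beat this only by re-routing demand through the nesting above, and each balancing round recovers one layer of that re-routing up to a fixed multiplicative loss, so the residual gap after $t$ rounds is geometric in $t$. Making (ii) precise is the crux: one sets up a potential --- essentially the maximum resource load of the current cover --- and shows that each round either multiplies the excess over the ideal by a constant factor below $1$ or already certifies the target bound, while controlling the truncation at distance $r$. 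Choosing $t$ so that the geometric residual falls below $\epsilon$ then gives, together with feasibility, $\myutil\ge\myoptutil/(\mybound VI/2+\epsilon)$.
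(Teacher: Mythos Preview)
Your proposal takes a genuinely different route from the paper, and the route has a real gap at exactly the point you flag as ``the crux.''

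\textbf{What the paper does.} The paper never runs an iterative balancing scheme on the original instance. Instead it first performs a one-shot local reduction: every constraint $\sum_{v\in V_i}x_v\le 1$ with $\mysize{V_i}>2$ is replaced by the $\binom{\mysize{V_i}}{2}$ pairwise constraints $x_u+x_v\le 1$. This produces an instance with $\mybound VI=\mybound VK=2$; summing the new constraints shows that scaling any feasible solution of the reduced instance by $2/\mybound VI$ is feasible for the original, and the original optimum is feasible for the reduced instance. Thus a $(1+\epsilon')$-approximation on the reduced instance yields a $(\mybound VI/2+\epsilon)$-approximation on the original. The entire $\mybound VI/2$ factor is absorbed by this elementary reduction; no balancing is involved.

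The remaining work is a local approximation scheme for the case $\mybound VI=\mybound VK=2$. Here the instance becomes a multigraph on $V$ whose edges are two-coloured ($K$ and $I$), with some designated $0$-vertices. The algorithm is explicit and non-iterative: each $x$-vertex $v$ computes the minimum $K$-length of alternating $(v,I,K,0)$- and $(v,K,K,0)$-walks (capped at a parameter $R$), forms two integers $p_v,q_v$ from these, and outputs $x_v=p_v/(p_v+q_v)$. Feasibility and the $(1+1/(R-1))$-approximation are proved by a case analysis that compares $p,q$ values at adjacent vertices and bounds $\myoptutil$ via short alternating walks terminating at $0$-vertices.

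\textbf{Where your sketch breaks.} Your plan hinges on showing that one round of the share-balancing update multiplies the potential's excess over the target by a fixed constant below~$1$. You do not specify the update rule precisely (``lowers its shares while its party-neighbours raise theirs by matching amounts'' admits many implementations, most of which oscillate or stall), and you do not indicate why any such rule would yield a contraction independent of the instance. This is not a detail that can be filled in routinely: for general $\mybound VI$ the resource hyperedges overlap arbitrarily, and a greedy local push at one heavily loaded resource can increase the load at a neighbouring one, so a naive potential need not decrease at all. The paper sidesteps this entirely by reducing to size-$2$ constraints first, after which the structure is one-dimensional (alternating walks) and admits a closed-form solution rather than an iterative one. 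If you want to salvage the iterative idea, you would at minimum need to carry out the same reduction first; but then the alternating-walk algorithm is both simpler and tight.
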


This upper bound is tight; by prior work~\cite[Corollary~2]{floreen08approximating} we know that for a given $\mybound VI \ge 2$, there is no local approxi\-mation algorithm for~\eqref{eq:max-min} with an approximation ratio less than $\mybound VI/2$, and this holds even if $\mybound VK = 2$.

The \emph{safe algorithm}~\cite{papadimitriou93linear,floreen08approximating} achieves the approximation ratio $\mybound VI$ for~\eqref{eq:max-min}. Our algorithm improves this by a factor~of~$2$.

The proof of Theorem~\ref{thm:upper-bound} is structured as follows. First, Section~\ref{sec:reduce} presents a simple modification of~\eqref{eq:max-min} which reduces the size of each constraint to~$2$, that is, we arrive at an instance with ${\mybound VI = \mybound VK = 2}$.

The rest of this work, starting from Section~\ref{sec:presentation}, presents a \emph{local approximation scheme} for the special case ${\mybound VI = \mybound VK = 2}$. A local approximation scheme is a family of local algorithms such that for any $\epsilon > 0$ there is a local algorithm which achieves the approximation ratio $1 + \epsilon$.

The local approximation scheme and the reduction of Section~\ref{sec:reduce} constitute the proof of Theorem~\ref{thm:upper-bound}. We are able to achieve an approximation ratio arbitrarily close to the lower bound $\mybound VI/2$, in spite of the crude reduction that we used in Section~\ref{sec:reduce} to bring $\mybound VI$ down to~$2$.

\section{Reducing the Size of Constraints}\label{sec:reduce}

We first wishfully assume that for any $\epsilon' > 0$ there is a local approximation algorithm which achieves the approximation ratio $1 + \epsilon'$ for the special case ${\mybound VI = \mybound VK = 2}$. In this section, we show that this assumption directly implies our main result, Theorem~\ref{thm:upper-bound}.

Fix an $\epsilon > 0$ and a bound $\mybound VI > 2$. Given an instance of~\eqref{eq:max-min}, we replace each constraint which involves more than $2$ variables by several constraints which involve exactly $2$ variables each. In precise terms, consider $i \in I$ such that $\mysize{V_i} > 2$. Let $n = \mysize{V_i}$. Remove constraint $i$ from the instance. Add $\binom{n}{2}$ distinct constraints $x_u + x_v \le 1$ where $u, v \in V_i$, $u \ne v$. For example, the constraint $x_1 + x_2 + x_3 \le 1$ is replaced by the set of constraints $x_1 + x_2 \le 1$, $x_1 + x_3 \le 1$, and $x_2 + x_3 \le 1$. This can be done by a local algorithm.

The set of feasible solutions differs between the modified instance and the original instance. However, the utility of a solution, $\myutil(x) = \min_k \sum_v c_{kv} x_v$, is the same in both instances.

Once we have constructed the modified instance, we apply the local approximation scheme to solve it within the approximation ratio $1+\epsilon'$ where $\epsilon' = 2 \epsilon / \mybound VI$; let $x'$ be the solution. We form a solution $x$ of the original instance by setting $x_v = 2 x'_v / \mybound VI$.

First, we show that $x$ is a feasible solution of the original instance. As $x'$ satisfies all constraints of size at most $2$, so does $x$. Now consider a constraint $i$ in the original instance with more than $2$ variables. Add up all new constraints which replace $i$ in the modified instance to obtain
\[
    {(n-1) x'_1} + {(n-1) x'_2} + \dotsb + {(n-1) x'_n} \le \tbinom{n}{2}
\]
which implies $x_1 + x_2 + \dotsb + x_n \le n / \mybound VI \le 1$.

Second, we show that $x$ is a $(\mybound VI/2 + \epsilon)$-approximate solution of the original instance. Let $x^{*}$ be an optimal solution of the original instance. Now $x^{*}$ is also a feasible solution of the modified instance, and $\myutil(x^{*})$ is a lower bound for the optimum value of the modified instance. Therefore $\myutil(x') \ge \myutil(x^{*}) / (1+\epsilon')$. By the choice of $x$, we conclude that
\[
    \myutil(x) = \frac{2 \myutil(x')}{\mybound VI} \ge \frac{\myutil(x^{*})}{\mybound VI/2 + \epsilon}.
\]
\vspace{0ex}

\begin{figure*}
    \centering
    \begin{pspicture}(-1.5,0.9)(8.0,5.5)%
        \newlength{\myoriga}\pssetlength{\myoriga}{0.0}%
        \newlength{\myorigb}\pssetlength{\myorigb}{2.5}%
        \newlength{\myorigc}\pssetlength{\myorigc}{5.0}%
        \newlength{\myorigd}\pssetlength{\myorigd}{7.5}%
        \newlength{\myonea}\pssetlength{\myonea}{\myoriga}%
        \newlength{\myoneb}\pssetlength{\myoneb}{\myorigb}%
        \newlength{\myonec}\pssetlength{\myonec}{\myorigc}%
        \newlength{\myoned}\pssetlength{\myoned}{\myorigd}%
        \psaddtolength{\myonea}{-1.0}%
        \psaddtolength{\myoneb}{-1.0}%
        \psaddtolength{\myonec}{-1.0}%
        \psaddtolength{\myoned}{-1.0}%
        \rput(\myonea,1.0){(a)}
        \rput(\myoneb,1.0){(b)}
        \rput(\myonec,1.0){(c)}
        \rput(\myoned,1.0){(d)}
        {%
            \mypsihyperedges
            \myframeb{\myoriga}{1}{\myoriga}{2}{0.15}{0.35}
            \myframeb{\myoriga}{2}{\myoriga}{3}{0.25}{0.45}
            \myframeb{\myonea}{4}{\myoriga}{4}{0.25}{0.45}
            \myframebL{\myonea}{3}{\myoriga}{4}{0.15}{0.35}
            \myframeb{\myorigb}{3}{\myorigb}{3}{0.25}{0.45}
            \myframeb{\myoneb}{4}{\myorigb}{4}{0.25}{0.45}
            \myframebL{\myoneb}{3}{\myorigb}{4}{0.15}{0.35}
            \myframeb{\myorigc}{2}{\myorigc}{3}{0.25}{0.45}
            \myframeb{\myonec}{4}{\myorigc}{4}{0.25}{0.45}
            \myframebL{\myonec}{3}{\myorigc}{4}{0.15}{0.35}
        }{%
            \mypskhyperedges
            \myframe{\myoriga}{3}{\myoriga}{4}{0.35}{0.25}{0.55}{0.55}
            \myframe{\myonea}{4}{\myonea}{4}{0.35}{0.35}{0.45}{0.55}
            \myframe{\myorigb}{3}{\myorigb}{4}{0.35}{0.35}{0.55}{0.55}
            \myframe{\myoneb}{4}{\myoneb}{4}{0.35}{0.35}{0.45}{0.55}
            \myframe{\myorigc}{3}{\myorigc}{4}{0.35}{0.25}{0.55}{0.55}
            \myframe{\myonec}{4}{\myonec}{5}{0.35}{0.35}{0.45}{0.45}
        }{%
            \mypshypernodes
            \newcommand{\mynode}[3]{%
                \Cnode*(#2,#1){dummy}
                \uput[ur](dummy){$#3$}
            }%
            \newcommand{\mynodez}[3]{%
                \Cnode(#2,#1){dummy}
                \uput[ur](dummy){$#3$}
            }%
            \mynode{1}{\myoriga}{5}
            \mynode{2}{\myoriga}{4}
            \mynode{3}{\myoriga}{3}
            \mynode{4}{\myoriga}{2}
            \mynode{4}{\myonea}{1}
            \mynode{3}{\myorigb}{3}
            \mynode{4}{\myorigb}{2}
            \mynode{4}{\myoneb}{1}
            \mynodez{2}{\myorigc}{6}
            \mynode{3}{\myorigc}{3}
            \mynode{4}{\myorigc}{2}
            \mynode{4}{\myonec}{1}
            \mynodez{5}{\myonec}{7}
        }{%
            \mypsnodes
            \newcommand{\mynode}[5]{%
                \Cnode*(#2,#1){#4}
                \uput[#5](#4){$#3$}
            }%
            \newcommand{\mynodez}[5]{%
                \Cnode(#2,#1){#4}
                \uput[#5](#4){$#3$}
            }%
            \mynodez{2}{\myorigd}{6}{v6}{r}
            \mynode{3}{\myorigd}{3}{v3}{r}
            \mynode{4}{\myorigd}{2}{v2}{r}
            \mynode{4}{\myoned}{1}{v1}{l}
            \mynodez{5}{\myoned}{7}{v7}{l}
        }{%
            \mypsiedges
            \ncarc{v1}{v2}
            \naput{$i_1$}
            \ncarc{v1}{v3}
            \nbput{$i_2$}
            \ncarc{v6}{v3}
            \nbput{$i_3$}
        }{%
            \mypskedges
            \ncarc{v1}{v7}
            \naput{$k_1$}
            \ncarc{v2}{v3}
            \naput{$k_2$}
        }
    \end{pspicture}
    \caption{Transforming the problem instance.}\label{fig:prelim}
\end{figure*}

\section{Presentation as a Graph}\label{sec:presentation}

We proceed to show that there indeed is a local approximation scheme for the special case ${\mybound VI = \mybound VK = 2}$. To simplify the discussion, we present the problem instance as an undirected multigraph $\myG$, where both edges and vertices are two-coloured. This allows us to describe the algorithm in graph-theoretic terms.

\begin{example}
    We use the following instance of~\eqref{eq:max-min} to illustrate the presentation as a graph. The beneficiary parties are $K = \{\mypartylist{2}{,}\}$ and the constraints are $I = \{\myresourcelist{4}{,}\}$. The objective is to
    \begin{equation}
        \begin{aligned}
            &\text{maximise } && \myutil = \min {\{ \myparty{1}{x_1}, \myparty{2}{x_2+x_3} \}} \\
            &\text{subject to } && x_1 + x_2 \le 1, \myresource{1} \\
            &&& x_1 + x_3 \le 1, \myresource{2} \\
            &&& x_3 + x_4 \le 1, \myresource{3} \\
            &&& x_4 + x_5 \le 1, \myresource{4} \\
            &&& x_1, x_2, \dotsc, x_5 \ge 0 .
        \end{aligned}\label{eq:ex:prelim}
    \end{equation}
    The hypergraph~$\myH$ is illustrated in Fig.~\ref{fig:prelim}a; solid lines are hyperedges~$V_i$ and dashed lines are hyperedges~$V_k$. An optimal solution with $\myutil = 2/3$ is $x_1 = 2/3$, $x_2 = x_3 = 1/3$, and $x_4 = x_5 = 0$.
\end{example}

\subsection{Remove Non-Contributing Agents}

We have assumed that $I_v$, $V_i$ and $V_k$ are nonempty for each $v \in V$, $i \in I$ and $k \in K$. We can make a further assumption that $K_v$ is nonempty for each $v$. If this is not the case for some $v$, we can simply choose $x_v = 0$ and remove the agent $v$ from the problem instance. If such changes make $V_i$ empty for some $i$, we can remove the redundant constraint $i$. These modifications can be done by a local algorithm; this is step illustrated in Fig.~\ref{fig:prelim}b.

\subsection{Hyperedges of Size 2 Only}

At this point, $\mysize{V_k} \in \{1,2\}$ for each $k \in K$ and $\mysize{V_i} \in \{1,2\}$ for each $i \in I$. If $\mysize{V_k} = 1$ for some $k$, we add a new agent $v$ into $V$. The variable $x_v$ controlled by agent $v$ is forced to $0$ by adding the constraint $x_v = 0$. Now we can set $c_{kv} = 1$ without changing the solution. Similarly, if $\mysize{V_i} = 1$ for some $i$, we add a new agent $v$ into $V$, we force $x_v = 0$, and we set $a_{iv} = 1$.

After these changes, $\mysize{V_k} = 2$ for each $k \in K$ and $\mysize{V_i} = 2$ for each $i \in I$. This simple structure comes at the cost of having some new agents $v$ for which we force $x_v = 0$; we also allow $K_v = \emptyset$ or $I_v = \emptyset$ for such agents.

\begin{example}
    This step is illustrated in Fig.~\ref{fig:prelim}c. We have transformed~\eqref{eq:ex:prelim} into the following form: maximise $\myutil = \min \{ x_1 + x_6, x_2+x_3 \}$ subject to $x_1 + x_2 \le 1$, $x_1 + x_3 \le 1$, $x_3 + x_7 \le 1$, $x_1, x_2, x_3 \ge 0$, and $x_6 = x_7 = 0$.
\end{example}

\subsection{Construct the Graph}

Next we represent the modified problem instance as an undirected multigraph~$\myG$. The set of vertices of $\myG$ is the set of agents $V$; the vertices $v$ for which we force $x_v = 0$ are called $0$-vertices and the remaining vertices are called $x$-vertices. For each party $k \in K$, we have the edge $V_k$; these are called \Kedges{}. For each constraint $i \in I$, we have the edge $V_i$; these are called \Iedges{}. There are no other edges. If there is an \Iedge{} $\{u,v\}$, we say that $u$ and $v$ are \myIdash\emph{adjacent}. We define \myKdash\emph{adjacent} vertices analogously.

In other words, the vertices of $\myG$ are coloured with two colours, $0$ and $x$, and the edges are coloured with two colours, $K$ and $I$. We have encoded the original problem instance as a coloured graph $\myG$.

The graph $\myG$ for the instance~\eqref{eq:ex:prelim} is illustrated in Fig.~\ref{fig:prelim}d. Open circles are $0$-vertices and closed circles are $x$-vertices; solid lines are \Iedges{} and dashed lines are \Kedges{}.

\section{Definitions}

\begin{definition}
Let $X, Y \in \{K, I\}$. A $(v_0, X, Y, v_n)$-\emph{walk} is a finite sequence of the form $(v_0, e_1, v_1, e_2, v_2, \dotsc, e_n, v_n)$ which satisfies all of the following: each $v_j$ is a vertex of $\myG$; each $e_j$ is an edge of the form $\{v_{j-1}, v_j\}$ in the graph $\myG$; the edges $e_j$ are alternately \Kedges{} and \Iedges{}; $e_1$ is an \Xedge{}; and $e_n$ is a \Yedge{}.
A $(v,X,Y,0)$-\emph{walk} is a $(v,X,Y,u)$-walk where $u$ is a $0$-vertex.
A $(0,X,Y,0)$-\emph{walk} is a $(v,X,Y,u)$-walk where $v$ and $u$ are a $0$-vertices.
The \myKdash\emph{length} of a walk is the number of \Kedges{} in the walk.
\end{definition}

We emphasise that (i)~there can be repeated edges and repeated vertices in walks; and (ii)~all walks throughout this work are \emph{alternating} walks where \Kedges{} and \Iedges{} alternate.

\begin{definition}
Let $v$ be an $x$-vertex and let ${X, Y \in \{K, I\}}$. We write $a(v,X,Y,0)$ for the minimum \Klength{} of a $(v,X,Y,0)$-walk; if no $(v,X,Y,0)$-walk exists, then we define that $a(v,X,Y,0) = \infty$. We write $A(v,X)$ for the maximum \Klength{} of a $(v,X,\cdot,\cdot)$-walk; if such walks with an arbitrarily large \Klength{} exist, then we define that ${A(v,X) = \infty}$.
\end{definition}

\begin{example}
Consider the vertex $1 \in V$ in Fig.~\ref{fig:prelim}d. We have $a(1,K,K,0) = 1$, $a(1,K,I,0) = \infty$, $a(1,I,K,0) = 2$, $a(1,I,I,0) = 1$, $A(1,K) = 1$ and $A(1,I) = 2$. Note that it is possible to have $A(1,K) < a(1,K,I,0)$.
\end{example}

Fix a constant $R \in \{1,2,\dotsc\}$. We define the bounded versions of $a$ and $A$ by
\begin{align*}
    b(v,X,Y,0) &= \min \{ a(v,X,Y,0), R \}, \\
    B(v,X) &= \min \{ A(v,X), R \}
\end{align*}
for each $X, Y \in \{K, I\}$.

\section{Local Algorithm}

Now we are ready to present the local approximation al\-go\-rithm. More accurately, we present a local approximation scheme, a family of algorithms parametrised by the constant~$R$. The value of $R$ determines the desired trade-off between the local horizon and the approximation ratio: the local horizon is $2R$ and the approximation ratio is $1 + 1/(R-1)$.

A local algorithm with any finite local horizon cannot determine the value of $a(v,X,Y,0)$ or $A(v,X)$ in the general case. However, assuming that the local horizon is $2R$, then each agent $v$ can determine locally whether $a(v,X,K,0) \le R$ or not. Furthermore, each agent $v$ can determine locally the value of $b(v,X,K,0)$ and $B(v,X)$. It turns out that this information is sufficient in order to obtain an approximation algorithm.

Our local algorithm consists of two steps. In the first step, each $x$-vertex $v$ determines whether $a(v,K,K,0) \le R$, whether $a(v,I,K,0) \le R$, and what are the values of $b(v,I,K,0)$, $b(v,K,K,0)$, $B(v,I)$, and $B(v,K)$. To implement this step in a real-world distributed system, it is sufficient to propagate \Khop{} counters along alternating walks in $\myG$ for $2R$ communication rounds.

\begin{subequations}
In the second step, each $x$-vertex $v$ performs the following local computations. First, choose the value $p_v$ as follows.
\begin{alignat}{2}
    p_v &= b(v,I,K,0) & \text{if } a(v,K,K,0) \le R,& \label{pa} \\
    p_v &= \min {\{ b(v,I,K,0), B(v,K) \}} \mspace{-30mu} & \text{otherwise}.& \label{pb}
\end{alignat}
\end{subequations}
\begin{subequations}
Choose the value $q_v$ in an analogous manner.
\begin{alignat}{2}
    q_v &= b(v,K,K,0) & \text{if } a(v,I,K,0) \le R,& \label{qa} \\
    q_v &= \min {\{ b(v,K,K,0), B(v,I) \}} \mspace{-20mu} & \text{otherwise}.& \label{qb}
\end{alignat}
\end{subequations}
Finally, let $x_v = p_v / (p_v + q_v)$. This value is the output of the vertex $v$.

\begin{example}
In Fig.~\ref{fig:prelim}d, agent $1 \in V$ chooses $x_1 = 1/2$ if $R = 1$ and $x_1 = 2/3$ if $R \ge 2$.
\end{example}

We now proceed to show that the chosen values $x_v$ provide a feasible and near-optimal solution to~\eqref{eq:max-min}.

\section{Auxiliary Results}

We begin with some observations on the structure of $\myG$. First, each $0$-vertex is incident to exactly one edge. Second, each $x$-vertex is incident to at least one \Kedge{} \emph{and} at least one \Iedge{}. Given an $x$-vertex $v$, we can construct both a $(v,K,\cdot,\cdot)$-walk and a $(v,I,\cdot,\cdot)$-walk with at least one edge, and we can extend such alternating walks indefinitely until we meet a $0$-vertex.

\begin{lemma}\label{lem:p-q-x-bound}
    For any $x$-vertex $v$, the local algorithm chooses $p_v \ge 1$, $q_v \ge 0$, and $x_v \le 1$.
\end{lemma}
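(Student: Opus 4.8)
The plan is to verify the three inequalities separately, in the order $p_v \ge 1$, then $q_v \ge 0$, then $x_v \le 1$; the first is where essentially all the (very modest) content lies, and the other two follow from it almost mechanically together with the definitions.

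First I would establish $p_v \ge 1$. In both branches~\eqref{pa} and~\eqref{pb}, one of the quantities defining $p_v$ is $b(v,I,K,0) = \min\{a(v,I,K,0), R\}$, so I claim $b(v,I,K,0) \ge 1$. Indeed, every $(v,I,K,0)$-walk ends with a \Kedge{} by definition and hence has \Klength{} at least $1$; taking the minimum over all such walks (or $\infty$ if none exists) gives $a(v,I,K,0) \ge 1$, and since $R \ge 1$ we get $b(v,I,K,0) = \min\{a(v,I,K,0),R\} \ge 1$. This already settles branch~\eqref{pa}, where $p_v = b(v,I,K,0)$. For branch~\eqref{pb} I additionally need $B(v,K) \ge 1$. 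Here I would invoke the structural observation recorded at the start of this section: since $v$ is an $x$-vertex it is incident to a \Kedge{}, so there is a $(v,K,\cdot,\cdot)$-walk with at least one edge, and its first edge is a \Kedge{}, giving \Klength{} at least $1$; hence $A(v,K) \ge 1$ and $B(v,K) = \min\{A(v,K),R\} \ge 1$ because $R \ge 1$. Therefore $p_v = \min\{b(v,I,K,0),B(v,K)\} \ge 1$ in this branch as well.

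Next, $q_v \ge 0$ is immediate from the definitions~\eqref{qa}--\eqref{qb}: the quantities $a(v,K,K,0)$ and $A(v,I)$ are \Klength{}s, hence counts of \Kedges{} and so nonnegative (possibly $\infty$), and $R \ge 1$; since $q_v$ is obtained from these by the operations $\min\{\,\cdot\,,R\}$ and $\min\{\,\cdot\,,\,\cdot\,\}$, both of which preserve nonnegativity and produce a finite value $\le R$, we get $0 \le q_v \le R$. Finally, $x_v \le 1$ follows from the first two bounds: $p_v \ge 1 > 0$ and $q_v \ge 0$ give that the denominator $p_v + q_v$ is positive (so $x_v$ is well defined) and at least $p_v$, whence $x_v = p_v/(p_v+q_v) \le p_v/p_v = 1$.

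I do not expect a genuine obstacle here; the only place where anything beyond bookkeeping with the definitions of walks, $a$, $A$, $b$ and $B$ is used is the bound $B(v,K) \ge 1$ in branch~\eqref{pb}, and that is exactly the point where the hypothesis that $v$ is an $x$-vertex (equivalently, that $v$ is incident to at least one \Kedge{}) enters, in combination with $R \ge 1$.
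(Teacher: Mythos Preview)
Your proposal is correct and is essentially a careful unpacking of the paper's one-line proof (``Follows from the definitions''); you have identified exactly the two nontrivial points, namely that any $(v,I,K,0)$-walk has \Klength{} at least~$1$ and that an $x$-vertex has $A(v,K)\ge 1$, and the rest is bookkeeping just as you say.
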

\begin{proof}
    Follows from the definitions.
\end{proof}

\subsection{Bounds for the Optimum}

Now we give upper bounds for the optimum value of~\eqref{eq:max-min}. Let $\myoptx$ be an optimal solution and let $\myoptutil$ be its objective value.

\begin{lemma}\label{lem:walk}
    If there exists a $(v,I,K,u)$-walk of \Klength{} $n$, then $\myoptx_v - \myoptx_u \le (1-\myoptutil) n$.
\end{lemma}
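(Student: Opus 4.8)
The plan is to induct on the \Klength{} $n$ of the $(v,I,K,u)$-walk, peeling edges off the $u$-end of the walk two at a time (one \Iedge{} and one \Kedge{}, since the walk alternates). The base case $n = 0$ means there are no \Kedges{} in the walk; but the walk starts with an \Iedge{} and ends with a \Kedge{}, so $n = 0$ would force the walk to be empty, i.e. $v = u$, and then $\myoptx_v - \myoptx_u = 0 \le 0$ holds trivially. (If one prefers to treat $n = 1$ as the base case, the same two estimates below with a single pass through them do the job.)

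For the inductive step, consider a $(v,I,K,u)$-walk of \Klength{} $n \ge 1$, say $(v_0 = v, e_1, v_1, \dotsc, e_m, v_m = u)$, where $e_m$ is a \Kedge{} and $e_{m-1}$ is an \Iedge{}. Write $w = v_{m-2}$ and $t = v_{m-1}$, so that $\{w, t\} = e_{m-1}$ is an \Iedge{} and $\{t, u\} = e_m$ is a \Kedge{}. Then $(v_0, e_1, \dotsc, e_{m-2}, w)$ is a $(v, I, \cdot, w)$-walk; because \Iedges{} and \Kedges{} alternate and $e_{m-1}$ is an \Iedge{}, the edge $e_{m-2}$ is a \Kedge{}, so this shorter walk is in fact a $(v, I, K, w)$-walk, and its \Klength{} is $n - 1$. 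Two elementary facts about $\myoptx$ finish the argument:
\begin{itemize}
\item Since $\{w, t\}$ is an \Iedge{}, feasibility of $\myoptx$ gives $\myoptx_w + \myoptx_t \le 1$, hence $\myoptx_w \le 1 - \myoptx_t$, i.e. $\myoptx_w - \myoptx_t \le 1 - 2\myoptx_t$. We actually only need $\myoptx_w - \myoptx_t \le 1$, but sharper: combined with the next bullet it will yield the $(1 - \myoptutil)$ increment.
\item Since $\{t, u\}$ is a \Kedge{}, the party $k$ with $V_k = \{t, u\}$ has $\sum_{v'} c_{kv'} \myoptx_{v'} = \myoptx_t + \myoptx_u \ge \myoptutil$, hence $\myoptx_t \ge \myoptutil - \myoptx_u$.
\end{itemize}
Chaining these, $\myoptx_w - \myoptx_u = (\myoptx_w - \myoptx_t) + (\myoptx_t - \myoptx_u) \le (1 - \myoptx_t) - \myoptx_t + \myoptx_t - \myoptx_u$; a cleaner route is $\myoptx_w \le 1 - \myoptx_t \le 1 - (\myoptutil - \myoptx_u) = (1 - \myoptutil) + \myoptx_u$, so $\myoptx_w - \myoptx_u \le 1 - \myoptutil$. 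By the induction hypothesis applied to the $(v, I, K, w)$-walk of \Klength{} $n - 1$, we have $\myoptx_v - \myoptx_w \le (1 - \myoptutil)(n-1)$. Adding,
\[
    \myoptx_v - \myoptx_u = (\myoptx_v - \myoptx_w) + (\myoptx_w - \myoptx_u) \le (1-\myoptutil)(n-1) + (1 - \myoptutil) = (1-\myoptutil) n,
\]
which is the claim.

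I do not expect a serious obstacle here; the only points requiring care are bookkeeping ones — making sure the parity of the alternation is tracked correctly so that the truncated prefix is genuinely a $(v,I,K,w)$-walk (this uses $n \ge 1$, so that $e_{m-1}$ and $e_m$ both exist and $e_m$ is the final \Kedge{}), and handling the degenerate short walks in the base case. One should also note that $u$ need not be a $0$-vertex and $\myoptx_u$ need not be $0$, so the bound is stated as a difference; the corollary one presumably wants — that $\myoptx_v \le (1-\myoptutil) n$ when the walk ends at a $0$-vertex, using $\myoptx_u = 0$ — is then immediate.
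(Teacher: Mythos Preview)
Your argument is correct and mirrors the paper's proof: both use the two elementary estimates $\myoptx_w + \myoptx_t \le 1$ from the \Iedge{} and $\myoptx_t + \myoptx_u \ge \myoptutil$ from the \Kedge{} to obtain the increment $1-\myoptutil$, then induct on the \Klength{}. The paper simply treats $n=1$ as the base case and writes ``the claim follows by induction,'' whereas you spell out the peeling step in full; the content is the same.
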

\begin{proof}
    If $n=1$, then there is a vertex $t$, an \Iedge{} $\{v,t\}$, and a \Kedge{} $\{t,u\}$. Then $\myoptx_v + \myoptx_t \le 1$ and $\myoptx_t + \myoptx_u \ge \myoptutil$, that is, $\myoptx_v - \myoptx_u \le 1 - \myoptutil$. The claim follows by induction.
\end{proof}

\begin{corollary}\label{cor:long-walk}
    If there exists a $(v,I,K,u)$-walk of \Klength{} $n$, then $\myoptutil \le 1 + 1/n$.
\end{corollary}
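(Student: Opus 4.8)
The plan is to obtain the bound directly from Lemma~\ref{lem:walk}, using only the feasibility of an optimal solution $\myoptx$ together with the elementary structural facts about $\myG$ recorded at the start of the Auxiliary Results section. First I would apply Lemma~\ref{lem:walk} to the given $(v,I,K,u)$-walk of \Klength{} $n$, which yields $\myoptx_v - \myoptx_u \le (1-\myoptutil)\,n$. Rearranging, this is $n(\myoptutil - 1) \le \myoptx_u - \myoptx_v$.

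Next I would bound the right-hand side from above. Since $\myoptx$ is feasible we have $\myoptx_v \ge 0$, hence $n(\myoptutil - 1) \le \myoptx_u$. It then remains only to note that $\myoptx_u \le 1$: if $u$ is a $0$-vertex then $\myoptx_u = 0$; otherwise $u$ is an $x$-vertex and is therefore incident to some \Iedge{} $\{u,w\}$, i.e.\ to a constraint $\myoptx_u + \myoptx_w \le 1$, so $\myoptx_u \le 1$ because $\myoptx_w \ge 0$. Combining these gives $n(\myoptutil - 1) \le 1$. Finally, any $(v,I,K,u)$-walk ends in a \Kedge{} and hence has \Klength{} at least $1$, so $n \ge 1$ and dividing through by $n$ gives $\myoptutil \le 1 + 1/n$.

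The argument is essentially immediate once Lemma~\ref{lem:walk} is available, so there is no real obstacle; the one point that deserves explicit mention is the bound $0 \le \myoptx_u \le 1$ on the endpoint of the walk. For a general max-min LP an upper bound on a single variable is not automatic, but in the reduced instance every \Iedge{} has size exactly $2$, and so the upper bound follows from one size-$2$ constraint plus the non-negativity of the other variable in that constraint (or trivially from $\myoptx_u = 0$ when $u$ is a $0$-vertex). No telescoping or auxiliary walk constructions beyond those already internal to Lemma~\ref{lem:walk} are needed.
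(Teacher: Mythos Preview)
Your proof is correct and follows the same approach as the paper's own proof, which simply says ``Follows from $\myoptx_u \le 1$, $\myoptx_v \ge 0$, and the previous lemma.'' You have merely made explicit the justification for $\myoptx_u \le 1$ that the paper leaves implicit.
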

\begin{proof}
    Follows from $\myoptx_u \le 1$, $\myoptx_v \ge 0$, and the previous lemma.
\end{proof}

\begin{corollary}\label{cor:short-0KK0-walk}
    If there exists a $(0,K,K,0)$-walk of \Klength{} $n$, then $\myoptutil \le 1 - 1/n$.
\end{corollary}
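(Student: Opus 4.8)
The plan is to derive this from Lemma~\ref{lem:walk} in the same spirit as Corollary~\ref{cor:long-walk}, by peeling off the first \Kedge{} of the walk. Write a $(0,K,K,0)$-walk of \Klength{} $n$ as $(v_0, e_1, v_1, e_2, \dotsc, e_{2n-1}, v_{2n-1})$, so that $v_0$ and $v_{2n-1}$ are $0$-vertices, $e_1=\{v_0,v_1\}$ is a \Kedge{}, and the edge colours alternate, forcing $e_2$ (when present) to be an \Iedge{} and $e_{2n-1}$ to be a \Kedge{}. The key structural observation is that, for $n \ge 2$, the residual sequence $(v_1, e_2, v_2, \dotsc, e_{2n-1}, v_{2n-1})$ is a $(v_1, I, K, 0)$-walk of \Klength{} $n-1$.

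First I would extract information from the \Kedge{} $e_1$: the beneficiary party $k$ with $V_k=\{v_0,v_1\}$ satisfies $\myoptx_{v_0}+\myoptx_{v_1}\ge\myoptutil$, and since $v_0$ is a $0$-vertex we have $\myoptx_{v_0}=0$, hence $\myoptx_{v_1}\ge\myoptutil$. Next, for $n\ge 2$, I would apply Lemma~\ref{lem:walk} to the $(v_1,I,K,0)$-walk of \Klength{} $n-1$; as its far endpoint is a $0$-vertex with value $0$, this yields $\myoptx_{v_1}\le(1-\myoptutil)(n-1)$. Combining the two bounds on $\myoptx_{v_1}$ gives $\myoptutil\le(1-\myoptutil)(n-1)$, and rearranging gives $n\,\myoptutil\le n-1$, i.e.\ $\myoptutil\le 1-1/n$. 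The case $n=1$ is handled directly: then the walk is just $(v_0,e_1,v_1)$ with both $v_0,v_1$ being $0$-vertices, so the \Kedge{} constraint reads $0+0\ge\myoptutil$, giving $\myoptutil\le 0=1-1/1$.

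I do not expect a genuine obstacle here; the only points requiring care are the bookkeeping of edge parities along the walk (so that the residual walk really has type $(v_1,I,K,0)$) and the boundary case $n=1$. An alternative route that avoids the case split is to sum the $n$ \Kedge{} inequalities $\myoptx_{v_{2j-2}}+\myoptx_{v_{2j-1}}\ge\myoptutil$ together with the $n-1$ \Iedge{} inequalities $\myoptx_{v_{2j-1}}+\myoptx_{v_{2j}}\le 1$ along the walk; after cancellation only the two endpoint values remain on the left, and both are $0$, so $0\ge n\,\myoptutil-(n-1)$ follows at once.
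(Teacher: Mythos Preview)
Your proof is correct and follows essentially the same route as the paper: peel off the first \Kedge{} to obtain $\myoptx_{v_1}\ge\myoptutil$, apply Lemma~\ref{lem:walk} to the residual $(v_1,I,K,0)$-walk of \Klength{} $n-1$, and combine. The only difference is the boundary case: the paper observes that $n=1$ cannot occur (each \Kedge{} has at most one $0$-vertex by construction), whereas you handle it directly; both are valid.
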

\begin{proof}
    The case $n = 1$ is not possible so assume ${n > 1}$. Then there is a $(v,I,K,u)$-walk of \Klength{} $n-1$ such that $u$ is a $0$-vertex and there is a \Kedge{} between $v$ and a $0$-vertex. Therefore $\myoptx_u = 0$ and $\myoptx_v \ge \myoptutil$. By Lemma~\ref{lem:walk},
    \begin{align*}
        {(1-\myoptutil) (n-1)} \ge {\myoptx_v - \myoptx_u} = {\myoptx_v \ge \myoptutil}.
    \end{align*}
    The claim follows.
\end{proof}

\begin{example}
    By Corollary~\ref{cor:short-0KK0-walk}, $\myoptutil = 2/3$ in~\eqref{eq:ex:prelim}.
\end{example}

\subsection{Adjacent Vertices}

\begin{lemma}\label{lem:I-adj}
    If $v$ and $u$ are \Iadjacent{} $x$-vertices, then
    \begin{align*}
        a(v,I,K,0) &\le a(u,K,K,0), \\
        b(v,I,K,0) &\le b(u,K,K,0), \\
        A(v,I) &\ge A(u,K), \\
        B(v,I) &\ge B(u,K).
    \end{align*}
\end{lemma}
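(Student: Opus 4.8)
The plan is to derive each of the four inequalities by manipulating walks directly, using the fact that $v$ and $u$ are joined by an \Iedge{} $e = \{v,u\}$. The key observation is that prepending or stripping this single \Iedge{} converts walks starting at $u$ with a \Kedge{} into walks starting at $v$ with an \Iedge{}, and vice versa, while changing the \Klength{} by exactly $0$ (the added/removed edge is an \Iedge{}, not a \Kedge{}). The only subtlety is the alternation requirement: a $(u,K,\cdot,\cdot)$-walk begins with a \Kedge{}, so prepending the \Iedge{} $e$ yields a legal alternating walk $(v, e, u, \dots)$ that is a $(v,I,\cdot,\cdot)$-walk of the same \Klength{}. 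Conversely, a $(v,I,\cdot,\cdot)$-walk whose \emph{first} edge happens to be $e$ itself — call it $(v,e,u,\dots)$ — has its tail $(u,\dots)$ a $(u,K,\cdot,\cdot)$-walk, again of the same \Klength{}. This bijective-ish correspondence is the engine for all four parts.

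For the first inequality, take any $(u,K,K,0)$-walk $W$ of \Klength{} $a(u,K,K,0)$ (if none exists the right side is $\infty$ and there is nothing to prove). Prepend $e$: since $W$ starts with a \Kedge{} and ends with a \Kedge{} at a $0$-vertex, the walk $(v,e,u)\cdot W$ (concatenation at $u$) is a $(v,I,K,0)$-walk of the same \Klength{}. Hence $a(v,I,K,0) \le a(u,K,K,0)$. The second inequality follows by applying $\min\{\cdot, R\}$ to both sides, which is monotone. For the third inequality, take any $(u,K,\cdot,\cdot)$-walk $W$ of \Klength{} $A(u,K)$ — or, if $A(u,K)=\infty$, a sequence of such walks with \Klength{} tending to infinity — and again prepend $e$ to obtain a $(v,I,\cdot,\cdot)$-walk of the same \Klength{}; therefore $A(v,I) \ge A(u,K)$, and the fourth inequality follows again by monotonicity of $\min\{\cdot,R\}$.

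I expect the only thing needing care — the "main obstacle," though a mild one — is bookkeeping the edge-colour alternation at the splice point: one must check that the edge of $W$ incident to $u$ is a \Kedge{} (true by definition of a $(u,K,\dots)$-walk, since $e_1$ is an \Xedge{} with $X=K$) so that inserting the \Iedge{} $e$ before it does not break the \Kedge{}/\Iedge{} alternation, and that $e \ne$ (that first edge of $W$) as \emph{edges of the multigraph}, which is automatic since they carry different colours. Everything else is immediate from the definitions of $a$, $A$, $b$, $B$ and the monotonicity of truncation at $R$. No induction is needed; a single prepended edge does all the work.
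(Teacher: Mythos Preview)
Your proposal is correct and matches the paper's own proof essentially verbatim: the paper also observes that any $(u,K,Y,b)$-walk can be extended to a $(v,I,Y,b)$-walk by prepending the \Iedge{} $\{v,u\}$ without changing the \Klength{}, from which all four inequalities follow. Your additional remarks about alternation bookkeeping and the monotonicity of $\min\{\cdot,R\}$ are accurate elaborations of steps the paper leaves implicit.
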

\begin{proof}
    Any given $(u,K,Y,b)$-walk can be extended into a $(v,I,Y,b)$-walk by first taking the \Iedge{} $\{v,u\}$. The \Klength{} does not change.
\end{proof}
\begin{lemma}\label{lem:K-adj}
    If $v$ and $u$ are \Kadjacent{} $x$-vertices, then
    \begin{align*}
        a(v,K,K,0) &\le a(u,I,K,0) + 1, \\
        b(v,K,K,0) &\le b(u,I,K,0) + 1, \\
        A(v,K) &\ge A(u,I) + 1, \\
        B(v,K) &\ge B(u,I).
    \end{align*}
\end{lemma}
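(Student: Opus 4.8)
The plan is to mirror the proof of Lemma~\ref{lem:I-adj} almost verbatim, replacing ``prepend an \Iedge{}'' by ``prepend a \Kedge{}'', and then to transfer the resulting statements about $a$ and $A$ to the truncated quantities $b$ and $B$ by elementary manipulations of $\min\{\cdot,R\}$.

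First I would record the basic extension step. Fix a \Kedge{} $e=\{v,u\}$ witnessing that $v$ and $u$ are \Kadjacent{}. Given any $(u,I,Y,w)$-walk $W=(u,e_1,v_1,\dots,e_m,w)$ --- whose first edge $e_1$ is an \Iedge{}, since the walk starts with $I$ --- form $e\cdot W=(v,e,u,e_1,v_1,\dots,e_m,w)$. This is again an alternating walk: the \Kedge{} $e$ is immediately followed by the \Iedge{} $e_1$, so the alternation condition holds, and by construction $e\cdot W$ is a $(v,K,Y,w)$-walk. Its \Klength{} is exactly one more than that of $W$, since a single \Kedge{} has been added at the front and nothing else changed.

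Next I would read off the two uncapped inequalities. For $a$, take a $(u,I,K,0)$-walk of minimum \Klength{} $a(u,I,K,0)$ (if no such walk exists the asserted bound is $\infty$ and there is nothing to prove); prepending $e$ yields a $(v,K,K,0)$-walk of \Klength{} $a(u,I,K,0)+1$, so $a(v,K,K,0)\le a(u,I,K,0)+1$. For $A$, take $(u,I,\cdot,\cdot)$-walks of \Klength{} arbitrarily close to $A(u,I)$ (or of arbitrarily large \Klength{} when $A(u,I)=\infty$); prepending $e$ turns each into a $(v,K,\cdot,\cdot)$-walk of \Klength{} one larger, so $A(v,K)\ge A(u,I)+1$, with the convention $\infty+1=\infty$. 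Finally I would pass to $b$ and $B$ using that $t\mapsto\min\{t,R\}$ is nondecreasing and satisfies $\min\{t,R\}\le\min\{t+1,R\}\le\min\{t,R\}+1$ for all $t\in\{0,1,\dots\}\cup\{\infty\}$: applying the right-hand inequality with $t=a(u,I,K,0)$ gives $b(v,K,K,0)\le\min\{a(u,I,K,0)+1,R\}\le b(u,I,K,0)+1$, and applying monotonicity with $t=A(u,I)$ gives $B(v,K)\ge\min\{A(u,I)+1,R\}\ge B(u,I)$.

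The argument is essentially bookkeeping, and I do not expect a real obstacle. The two points that need care are: that prepending a \Kedge{} preserves alternation precisely \emph{because} the walk being extended begins with an \Iedge{} --- which is why the hypothesis must pair a $(u,I,\cdot,\cdot)$-walk with a $(v,K,\cdot,\cdot)$-walk, not a $(u,K,\cdot,\cdot)$-walk with a $(v,I,\cdot,\cdot)$-walk; and the mild asymmetry between the $b$ and $B$ bounds, namely that the extra ``$+1$'' survives the truncation on the $a$/$b$ side but is absorbed by the cap on the $A$/$B$ side, so one should not expect the stronger $B(v,K)\ge B(u,I)+1$.
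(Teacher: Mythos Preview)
Your proposal is correct and follows exactly the paper's approach: prepend the \Kedge{} $\{v,u\}$ to any $(u,I,Y,w)$-walk to obtain a $(v,K,Y,w)$-walk whose \Klength{} is one larger, then read off the four inequalities. The paper's own proof is a two-sentence version of this; you have simply spelled out the bookkeeping for the $a\to b$ and $A\to B$ truncations and the $\infty$ cases, which the paper leaves implicit. (Minor remark: since \Klength{}s are integers, when $A(u,I)<\infty$ the supremum is attained, so ``arbitrarily close to $A(u,I)$'' can be replaced by ``equal to $A(u,I)$''.)
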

\begin{proof}
    Any given $(u,I,Y,b)$-walk can be extended into a $(v,K,Y,b)$-walk by first taking the \Kedge{} $\{v,u\}$. The \Klength{} increases by~$1$.
\end{proof}

\section{Feasibility}\label{sec:feas}

We show that the values $x_v$ chosen by the local algorithm provide a feasible solution to \eqref{eq:max-min}. Consider an \Iedge{} $\{v,u\}$. We need to prove that $x_v + x_u \le 1$. If $v$ or $u$ is a $0$-vertex, then the claim holds by Lemma~\ref{lem:p-q-x-bound}; we focus on the case that $v$ and $u$ are $x$-vertices. We begin with the following lemma.

\begin{lemma}\label{lem:pv-qu}
    If $v$ and $u$ are \Iadjacent{} $x$-vertices, then we have ${p_v \le q_u}$.
\end{lemma}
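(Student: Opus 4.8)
The plan is to unfold the definitions of $p_v$ and $q_u$ in the four possible cases determined by whether $a(v,K,K,0)\le R$ and whether $a(u,I,K,0)\le R$, and in each case exhibit an inequality relating the quantities defining $p_v$ to those defining $q_u$. The key structural tool is Lemma~\ref{lem:I-adj}: since $v$ and $u$ are \Iadjacent, we have $a(v,I,K,0)\le a(u,K,K,0)$, $b(v,I,K,0)\le b(u,K,K,0)$, $A(v,I)\ge A(u,K)$, and $B(v,I)\ge B(u,K)$. Recall $p_v$ is either $b(v,I,K,0)$ or $\min\{b(v,I,K,0),B(v,K)\}$, while $q_u$ is either $b(u,K,K,0)$ or $\min\{b(u,K,K,0),B(u,I)\}$; in every case $p_v\le b(v,I,K,0)$ and $q_u$ has $b(u,K,K,0)$ as one of its defining terms, so the first inequality of Lemma~\ref{lem:I-adj} (passed through the $\min$ defining $b$, which is monotone) already gives $p_v\le b(v,I,K,0)\le b(u,K,K,0)$. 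The only thing left is to handle the case where $q_u$ is the \emph{smaller} alternative~\eqref{qb}, i.e.\ $q_u=\min\{b(u,K,K,0),B(u,I)\}$ and $B(u,I)<b(u,K,K,0)$.

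So the heart of the argument is: when $q_u$ is given by~\eqref{qb}, show $p_v\le B(u,I)$ as well. Note~\eqref{qb} is used precisely when $a(u,I,K,0)>R$; I will want to translate this into a statement about $v$. First I would observe that if $a(u,I,K,0)>R$ then, using Lemma~\ref{lem:K-adj}-type reasoning from the other side or directly, $v$ cannot have a short walk forcing $p_v$ to come from~\eqref{pa} in a way that exceeds $B(u,I)$ — more concretely, I expect $a(v,K,K,0)>R$ as well (a short $(v,K,K,0)$-walk, prepended with the \Iedge{} $\{u,v\}$, would give a short $(u,I,K,0)$-walk, contradicting $a(u,I,K,0)>R$; the \Klength{} is preserved by Lemma~\ref{lem:I-adj}). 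Hence $p_v$ is given by~\eqref{pb}, so $p_v\le B(v,K)$. Then I need $B(v,K)\le B(u,I)$. This is the one direction \emph{not} directly supplied by Lemma~\ref{lem:I-adj} or Lemma~\ref{lem:K-adj}; I would prove it by extending a maximum-\Klength{} $(v,K,\cdot,\cdot)$-walk backwards through the \Iedge{} $\{u,v\}$ to get a $(u,I,\cdot,\cdot)$-walk of the same \Klength{}, giving $A(u,I)\ge A(v,K)$ and hence, taking $\min$ with $R$, $B(u,I)\ge B(v,K)$. Combining, $p_v\le B(v,K)\le B(u,I)$, and together with $p_v\le b(u,K,K,0)$ from the previous paragraph we get $p_v\le\min\{b(u,K,K,0),B(u,I)\}=q_u$.

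The remaining cases are easier: if $q_u$ is given by~\eqref{qa} then $q_u=b(u,K,K,0)\ge b(v,I,K,0)\ge p_v$ immediately. I would organise the write-up as: (1) in all cases $p_v\le b(v,I,K,0)\le b(u,K,K,0)$; (2) when~\eqref{qa} applies, done; (3) when~\eqref{qb} applies, show $a(v,K,K,0)>R$ so~\eqref{pb} applies, then show $B(v,K)\le B(u,I)$ by the walk-extension argument, and conclude. The main obstacle I anticipate is making sure the walk-extension argument for $A(u,I)\ge A(v,K)$ is airtight regarding the alternation condition and the endpoint type (the walk must start with an \Xedge{} and the backward extension through an \Iedge{} must keep the alternation valid and not change the \Klength), and similarly that the contradiction establishing $a(v,K,K,0)>R$ correctly tracks \Klengths{} through the $\min\{\cdot,R\}$ truncation in the definition of $b$.
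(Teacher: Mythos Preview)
Your proposal is correct and follows essentially the same route as the paper, which splits only on whether $a(v,K,K,0)\le R$: in the first case it deduces $a(u,I,K,0)\le R$ (your contrapositive) and compares $b(v,I,K,0)\le b(u,K,K,0)$; in the second it bounds $p_v=\min\{b(v,I,K,0),B(v,K)\}\le\min\{b(u,K,K,0),B(u,I)\}\le q_u$ directly, avoiding any sub-split on $q_u$. The inequality $B(u,I)\ge B(v,K)$ that you flag as a potential obstacle is in fact just Lemma~\ref{lem:I-adj} applied with the roles of $u$ and $v$ swapped (the \Iadjacency{} relation is symmetric), so no separate walk-extension argument is needed.
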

\begin{proof}
    First, assume that $a(v,K,K,0) \le R$. In this case, Lemma~\ref{lem:I-adj} implies that $a(u,I,K,0) \le R$. We have $p_v = b(v,I,K,0)$ and $q_u = b(u,K,K,0)$. We apply Lemma~\ref{lem:I-adj} again to obtain $p_v \le q_u$.

    Second, assume that $a(v,K,K,0) > R$. In this case, Lemma~\ref{lem:I-adj} implies that $b(v,I,K,0) \le b(u,K,K,0)$ and $B(v,K) \le B(u,I)$. We obtain
    \begin{align*}
        p_v &= \min {\{ b(v,I,K,0), B(v,K) \}} \\
            &\le \min {\{ b(u,K,K,0), B(u,I) \}} \le q_u .
    \end{align*}
    We conclude that the claim holds in both cases.
\end{proof}

\begin{corollary}
    If $v$ and $u$ are \Iadjacent{} $x$-vertices, then ${x_v + x_u \le 1}$.
\end{corollary}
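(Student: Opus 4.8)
The plan is to reduce everything to the single inequality already packaged in Lemma~\ref{lem:pv-qu}, applied in both directions. Since $I$-adjacency is a symmetric relation on the pair $\{v,u\}$, Lemma~\ref{lem:pv-qu} gives both $p_v \le q_u$ and $p_u \le q_v$. These are the only two facts about the interaction of $v$ and $u$ that I expect to need; everything else is the elementary observation that, for fixed $p > 0$, the map $t \mapsto p/(p+t)$ is nonincreasing on $t \ge 0$.

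Concretely, I would first record that all denominators in sight are positive: by Lemma~\ref{lem:p-q-x-bound} we have $p_v \ge 1$ and $p_u \ge 1$, so $p_v + q_v > 0$, $p_u + q_u > 0$, and $p_v + p_u > 0$, and the values $x_v = p_v/(p_v+q_v)$ and $x_u = p_u/(p_u+q_u)$ are well defined. Then I would estimate each summand separately: using $q_v \ge p_u$ gives
\[
    x_v = \frac{p_v}{p_v + q_v} \le \frac{p_v}{p_v + p_u},
\]
and symmetrically, using $q_u \ge p_v$ gives $x_u \le p_u/(p_u + p_v)$. Adding the two bounds yields $x_v + x_u \le (p_v + p_u)/(p_v + p_u) = 1$, which is the claim. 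Together with the already-handled case where $v$ or $u$ is a $0$-vertex (settled by Lemma~\ref{lem:p-q-x-bound} via $x_v, x_u \le 1$ and the fact that a $0$-vertex contributes $x = p/(p+q)$ with the forced value, actually $x=0$ there), this finishes feasibility of every $I$-edge and hence of the whole solution.

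I do not anticipate a genuine obstacle here; the lemma did the real work. The only points requiring a line of care are (i) explicitly invoking Lemma~\ref{lem:pv-qu} a second time with the roles of $v$ and $u$ exchanged, which is legitimate precisely because the hypothesis ``$v$ and $u$ are $I$-adjacent $x$-vertices'' is symmetric, and (ii) confirming the positivity of $p_v + q_v$ and $p_u + q_u$ so that the monotonicity argument applies, which is immediate from $p_v, p_u \ge 1$.
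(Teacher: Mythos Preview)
Your argument is correct and matches the paper's proof essentially line for line: invoke Lemma~\ref{lem:pv-qu} once for each ordering of $(v,u)$ to obtain $p_v \le q_u$ and $p_u \le q_v$, then use monotonicity of $p/(p+t)$ to bound $x_v + x_u$ by $p_v/(p_v+p_u) + p_u/(p_u+p_v) = 1$. The only addition you make is the explicit positivity check via Lemma~\ref{lem:p-q-x-bound}, which the paper leaves implicit.
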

\begin{proof}
    By Lemma~\ref{lem:pv-qu}, we have ${p_v \le q_u}$, and by symmetry, ${p_u \le q_v}$. Therefore
    \begin{align*}
    x_v + x_u &= \frac{p_v}{p_v + q_v} + \frac{p_u}{p_u + q_u} \\
        &\le \frac{p_v}{p_v + p_u} + \frac{p_u}{p_u + p_v} = 1 .
    \end{align*}
    This completes the proof.
\end{proof}

\section{Approximation Guarantee}\label{sec:opt}

Next we show that the values $x_v$ chosen by the local algorithm provide a near-optimal solution to \eqref{eq:max-min}. Consider a \Kedge{} $\{v,u\}$. We show that $x_v + x_u \ge \alpha \myoptutil$ where $\alpha = 1-1/R$.

\subsection{One \texorpdfstring{$x$}{x}-vertex and One 0-vertex}

Let us first consider the case where $v$ is an $x$-vertex and $u$ is a $0$-vertex. Then we have
\[
    q_v \le b(v,K,K,0) = a(v,K,K,0) = 1 \le R
\]
and $p_v = b(v,I,K,0)$.

\begin{lemma}\label{lem:opt-x0-va}
    If $a(v,I,K,0) \le R$, then $x_v + x_u \ge \myoptutil$.
\end{lemma}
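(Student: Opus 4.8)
The plan is to reduce the claim to a single inequality and then prove it by sandwiching $x_v$ between $n/(n+1)$ from below (via the algorithm) and $\myoptutil$ from above (via Corollary~\ref{cor:short-0KK0-walk}). Since $u$ is a $0$-vertex we have $x_u = 0$, so it suffices to show $x_v \ge \myoptutil$. I would start from $x_v = p_v/(p_v+q_v)$ and invoke the two facts already recorded in the paragraph preceding the lemma: $q_v \le b(v,K,K,0) = a(v,K,K,0) = 1$ (using that $\{v,u\}$ is a \Kedge{} to a $0$-vertex, so the shortest $(v,K,K,0)$-walk has \Klength{}~$1$), and $p_v = b(v,I,K,0)$ (again because $a(v,K,K,0) = 1 \le R$, so branch~\eqref{pa} applies). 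Under the hypothesis $a(v,I,K,0) \le R$ we get $p_v = b(v,I,K,0) = a(v,I,K,0)$; write $n := a(v,I,K,0) = p_v \ge 1$. Then $x_v = p_v/(p_v+q_v) \ge n/(n+1)$.

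For the upper bound on $\myoptutil$, I would build an explicit $(0,K,K,0)$-walk of \Klength{} $n+1$. By definition of $a$ there is a $(v,I,K,0)$-walk of \Klength{} exactly $n$; it leaves $v$ along an \Iedge{} and terminates at some $0$-vertex along a \Kedge{}. Prepending the \Kedge{} $\{u,v\}$ (with $u$ the given $0$-vertex) keeps the $K$/$I$ alternation intact, since the appended walk begins with an \Iedge{}, and increases the \Klength{} by $1$. This yields a $(0,K,K,0)$-walk of \Klength{} $n+1$, so Corollary~\ref{cor:short-0KK0-walk} gives $\myoptutil \le 1 - 1/(n+1) = n/(n+1)$. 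Combining, $x_v + x_u = x_v \ge n/(n+1) \ge \myoptutil$, which is exactly the claim.

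The only delicate point is the walk bookkeeping in the second paragraph: checking that concatenating the \Kedge{} $\{u,v\}$ in front of a minimum $(v,I,K,0)$-walk produces a legal alternating walk (the alternation constraint is the thing to verify at the junction vertex $v$) and that its \Klength{} is $n+1$ rather than $n$. Everything else — $q_v \le 1$, $p_v = b(v,I,K,0)$, the passage $b(v,I,K,0)=a(v,I,K,0)$ under the hypothesis, and the final chain of inequalities — is immediate from the definitions and the results already established.
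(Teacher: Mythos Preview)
Your argument is correct and essentially identical to the paper's: you set $n=p_v$ and obtain $x_v\ge n/(n+1)$ together with a $(0,K,K,0)$-walk of \Klength{} $n+1$, whereas the paper sets $n=p_v+1$ and writes $x_v=1-1/n$ with a walk of \Klength{} $n$; this is purely a relabelling. The only refinement the paper makes is noting that under the hypothesis $a(v,I,K,0)\le R$ branch~\eqref{qa} applies, so in fact $q_v=1$ (not merely $q_v\le 1$), giving equality $x_v=p_v/(p_v+1)$ rather than just the inequality---but your weaker bound already suffices.
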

\begin{proof}
    We have $q_v = 1$ and
    \[
        x_v + x_u  = x_v = {1 - 1/n}
    \]
    where $n = p_v+1$. There exists a $(0,K,K,0)$-walk of \Klength{} $n$, starting from $u$ and going through~$v$. Corollary~\ref{cor:short-0KK0-walk} implies $\myoptutil \le 1 - 1/n$.
\end{proof}

\begin{lemma}\label{lem:opt-x0-vb}
    If $a(v,I,K,0) > R$, then $x_v + x_u \ge \alpha \myoptutil$.
\end{lemma}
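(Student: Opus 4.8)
The plan is to read off the algorithm's choice of $p_v$ from the hypothesis, combine it with the trivial bound $q_v \le 1$ already recorded in the discussion just before Lemma~\ref{lem:opt-x0-va}, and then absorb the remaining slack using the a~priori estimate $\myoptutil \le 1$, which is available here because the $x$-vertex $v$ is \Kadjacent{} to a $0$-vertex.

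First I would pin down $p_v = R$. The discussion preceding Lemma~\ref{lem:opt-x0-va} gives $p_v = b(v,I,K,0)$, and the hypothesis $a(v,I,K,0) > R$ forces $b(v,I,K,0) = \min\{a(v,I,K,0), R\} = R$. The same discussion gives $q_v \le b(v,K,K,0) = 1$, and Lemma~\ref{lem:p-q-x-bound} gives $q_v \ge 0$; since moreover $x_u = 0$, this yields $x_v + x_u = x_v = p_v/(p_v+q_v) = R/(R+q_v) \ge R/(R+1)$.

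Next I would show $\myoptutil \le 1$. Since $v$ is an $x$-vertex it is incident to some \Iedge{} $\{v,t\}$, whence $\myoptx_v \le \myoptx_v + \myoptx_t \le 1$. Since $\{v,u\}$ is a \Kedge{} with $u$ a $0$-vertex we have $\myoptx_u = 0$, so $\myoptutil \le \myoptx_v + \myoptx_u = \myoptx_v \le 1$; hence $\alpha\myoptutil \le \alpha = (R-1)/R$.

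It then remains only to verify the elementary inequality $R/(R+1) \ge (R-1)/R$, i.e.\ $R^2 \ge R^2-1$, which chains the two estimates into $x_v + x_u \ge R/(R+1) \ge (R-1)/R \ge \alpha\myoptutil$. I do not expect a genuine obstacle; the step that deserves a moment's thought is simply recognising that the hypothesis $a(v,I,K,0) > R$ is exactly what pushes $p_v$ to its maximal value $R$, so that $x_v$ already lies within an additive $1/(R+1)$ of~$1$, and the loss factor $1/\alpha = R/(R-1)$ permitted against the benchmark $\myoptutil \le 1$ is just large enough to cover that margin. (This contrasts with the complementary case of Lemma~\ref{lem:opt-x0-va}, where $p_v$ may be as small as $1$ and the crude bound $\myoptutil\le 1$ no longer suffices, so one must invoke the sharper $(0,K,K,0)$-walk estimate of Corollary~\ref{cor:short-0KK0-walk}.)
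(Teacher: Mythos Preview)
Your proposal is correct and follows essentially the same route as the paper's own proof: identify $p_v=R$ from the hypothesis, use $q_v\le 1$ from the preceding setup to bound $x_v$, and invoke $\myoptutil\le 1$ via the \Kedge{} to the $0$-vertex $u$. The paper's argument is terser---it asserts $x_v\ge 1-1/R$ directly without isolating the intermediate estimate $R/(R+1)\ge (R-1)/R$---but the content is the same.
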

\begin{proof}
    We have $q_v \le 1$, $p_v = R$ and
    \[
        x_v + x_u = x_v \ge 1 - 1/R = \alpha.
    \]
    In the optimal solution, $\myoptx_v \le 1$ and $\myoptx_u = 0$. Therefore $\myoptutil \le 1$.
\end{proof}

\begin{corollary}\label{cor:opt-x0}
    If $x$-vertex $v$ and $0$-vertex $u$ are \Kadjacent{}, then $x_v + x_u \ge \alpha\myoptutil$.
\end{corollary}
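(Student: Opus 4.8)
The plan is simply to combine the two preceding lemmas by a case split on whether $a(v,I,K,0) \le R$. These are the only two possibilities, and they are exactly the hypotheses of Lemma~\ref{lem:opt-x0-va} and Lemma~\ref{lem:opt-x0-vb} respectively, so the corollary follows immediately once we check that both conclusions imply $x_v + x_u \ge \alpha\myoptutil$ with $\alpha = 1 - 1/R$.

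First I would recall that $\alpha = 1 - 1/R \le 1$, since $R \ge 1$. In the case $a(v,I,K,0) \le R$, Lemma~\ref{lem:opt-x0-va} gives the stronger bound $x_v + x_u \ge \myoptutil$, and since $\alpha \le 1$ we certainly have $x_v + x_u \ge \myoptutil \ge \alpha\myoptutil$ (using $\myoptutil \ge 0$, which holds because $x = 0$ is feasible). In the case $a(v,I,K,0) > R$, Lemma~\ref{lem:opt-x0-vb} gives $x_v + x_u \ge \alpha\myoptutil$ directly. In either case the desired inequality holds, which completes the proof.

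There is essentially no obstacle here: the corollary is a trivial bookkeeping step that packages the $x$-vertex/$0$-vertex analysis into a single clean statement for later use (the remaining case, a \Kedge{} between two $x$-vertices, will be handled separately). The only thing worth stating explicitly is the monotonicity observation $\myoptutil \le \alpha^{-1}(x_v+x_u)$ following from $\alpha \le 1$ when invoking the stronger Lemma~\ref{lem:opt-x0-va}; everything else is immediate from the definitions and the two lemmas already proved.
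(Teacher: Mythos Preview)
Your proposal is correct and takes essentially the same approach as the paper, which simply writes ``Apply Lemmata~\ref{lem:opt-x0-va} and \ref{lem:opt-x0-vb}.'' Your added remark that $\alpha \le 1$ and $\myoptutil \ge 0$ are needed to pass from the stronger conclusion of Lemma~\ref{lem:opt-x0-va} to the stated bound is a fair point the paper leaves implicit.
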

\begin{proof}
    Apply Lemmata~\ref{lem:opt-x0-va} and \ref{lem:opt-x0-vb}.
\end{proof}

\subsection{Two \texorpdfstring{$x$}{x}-vertices}

Second, we consider the case where both $v$ and $u$ are $x$-vertices. There are several subcases to study.
\begin{lemma}\label{lem:opt-vpa-vqa}
    If $a(v,K,K,0) \le R$ and $a(v,I,K,0) \le R$, then $x_v + x_u \ge \myoptutil$.
\end{lemma}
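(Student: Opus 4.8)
The plan is to bound $x_v$ and $x_u$ separately in terms of quantities that, via the corollaries of Section~\ref{sec:opt}, control $\myoptutil$. Since both $v$ and $u$ are $x$-vertices joined by a \Kedge{}, and since we are in the case $a(v,K,K,0) \le R$ and $a(v,I,K,0) \le R$, the selection rules \eqref{pa} and \eqref{qa} apply to $v$: we get $p_v = b(v,I,K,0)$ and $q_v = b(v,K,K,0)$. I would first argue that $a(v,K,K,0) \le R$ together with the \Kedge{} $\{v,u\}$ forces, by Lemma~\ref{lem:K-adj} read in the reverse direction (or directly: prepend the \Kedge{} $\{u,v\}$ to a minimum $(v,K,K,0)$-walk), that $a(u,I,K,0) \le R$, so that $q_u = b(u,K,K,0)$; and likewise that $a(v,I,K,0) \le R$ yields a short $(u,K,K,0)$-walk, hence $p_u = b(u,I,K,0)$ and $a(u,K,K,0) \le R$. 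So all four of $p_v, q_v, p_u, q_u$ equal the corresponding (bounded) minimum \Klength{} quantities.

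Next I would translate these into a walk whose existence pins down $\myoptutil$. The key point is that $a(v,I,K,0) \le R$ and $a(u,I,K,0) \le R$ mean $p_v = a(v,I,K,0)$ and $p_u = a(u,I,K,0)$ without truncation (since the values are $\le R$), and concatenating a minimum $(v,I,K,0)$-walk, the \Kedge{} $\{v,u\}$, and a minimum $(u,I,K,0)$-walk reversed — or more simply, a $(v,I,K,0)$-walk followed by the \Kedge{} to $u$ followed by a $(u,I,K,0)$-walk — produces a $(0,K,K,0)$-walk of \Klength{} $p_v + 1 + p_u$ (one needs to check that the alternation is respected where the walks meet; the \Iedge{}/\Kedge{} bookkeeping works out because each minimum walk ends at a $0$-vertex on a \Kedge{}). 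Hence $\myoptutil \le 1 - 1/(p_v + p_u + 1)$ by Corollary~\ref{cor:short-0KK0-walk}.

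Now compute $x_v + x_u$. We have
\[
    x_v + x_u = \frac{p_v}{p_v + q_v} + \frac{p_u}{p_u + q_u},
\]
and by Lemma~\ref{lem:K-adj} applied to $q_v = b(v,K,K,0)$ and to $q_u = b(u,K,K,0)$ (each is at most the corresponding $b(\cdot,I,K,0)+1$, i.e. $q_v \le p_u + 1$ and $q_u \le p_v + 1$ using that $u,v$ are \Kadjacent{} and the relevant $b(\cdot,I,K,0)$ equals $p_u$ resp.\ $p_v$), I would lower-bound each fraction: $x_v \ge p_v/(p_v + p_u + 1)$ and $x_u \ge p_u/(p_u + p_v + 1)$, so that
\[
    x_v + x_u \ge \frac{p_v + p_u}{p_v + p_u + 1} = 1 - \frac{1}{p_v + p_u + 1} \ge \myoptutil.
\]
The main obstacle I anticipate is the careful alternation bookkeeping when gluing the two $(\cdot,I,K,0)$-walks via the \Kedge{} $\{v,u\}$ — making sure the glued object is a genuine alternating $(0,K,K,0)$-walk of the claimed \Klength{}, and handling the degenerate possibility that a minimum walk has \Klength{} making the concatenation collapse; once that is pinned down, the inequality chain is routine. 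A secondary subtlety is confirming $q_v \le p_u + 1$ and $q_u \le p_v + 1$ hold with the untruncated values, which is where $a(v,K,K,0)\le R$ and $a(u,K,K,0)\le R$ get used to ensure the $b$'s coincide with the $a$'s.
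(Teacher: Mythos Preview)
There is a genuine gap. Your claim that the \Kedge{} $\{v,u\}$ together with $a(v,K,K,0)\le R$ forces $a(u,I,K,0)\le R$ is false: prepending the \Kedge{} $\{u,v\}$ to a $(v,K,K,0)$-walk yields two consecutive \Kedges{}, so the result is not alternating, and Lemma~\ref{lem:K-adj} only gives the \emph{lower} bound $a(u,I,K,0)\ge a(v,K,K,0)-1$. Nothing in the hypotheses bounds $a(u,I,K,0)$ from above. Likewise, from $a(v,I,K,0)\le R$ you only obtain $a(u,K,K,0)\le a(v,I,K,0)+1\le R+1$, not $a(u,K,K,0)\le R$; the paper explicitly treats the boundary case $a(u,K,K,0)=R+1$. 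The upshot is that your $(0,K,K,0)$-walk of \Klength{} $p_v+1+p_u$ need not exist: if $a(u,I,K,0)>R$ then $p_u=b(u,I,K,0)=R$ while every actual $(u,I,K,0)$-walk (if any) has \Klength{} strictly greater than $R$, so Corollary~\ref{cor:short-0KK0-walk} cannot be invoked with the value $p_v+p_u+1$.

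The paper sidesteps this by keeping the walk construction \emph{asymmetric}, anchored at $v$ alone: it glues a minimum $(v,K,K,0)$-walk (reversed) to a minimum $(v,I,K,0)$-walk, which is legitimate because both $a(v,K,K,0)\le R$ and $a(v,I,K,0)\le R$ are hypotheses, yielding a $(0,K,K,0)$-walk of \Klength{} $p_v+q_v$ and hence $\myoptutil\le 1-1/(p_v+q_v)$. On the algebraic side it matches this by showing $q_u\le p_v+1$ (always, since $q_u\le b(u,K,K,0)$) and $p_u\ge q_v-1$; the second inequality needs a case split because $p_u$ may be set by rule~\eqref{pb}, but in that case $a(u,K,K,0)=R+1$ forces $B(u,K)=R$ and hence $p_u=b(u,I,K,0)$ after all. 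This gives $x_v+x_u\ge 1-1/(p_v+q_v)\ge\myoptutil$. Your symmetric route would be salvageable only by falling back on this $v$-centred walk to bound $\myoptutil$, at which point it collapses into the paper's argument.
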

\begin{proof}
    Regardless of whether $q_u$ satisfies~\eqref{qa} or~\eqref{qb}, by Lemma~\ref{lem:K-adj}
    \[
        q_u \le b(u,K,K,0) \le b(v,I,K,0) + 1 = p_v + 1.
    \]
    If $p_u$ satisfies~\eqref{pa}, we have
    \[
        p_u = b(u,I,K,0) \ge b(v,K,K,0) - 1 = q_v - 1.
    \]
    Otherwise $p_u$ satisfies~\eqref{pb}. We have $R < a(u,K,K,0) \le a(v,I,K,0) + 1 \le R + 1$, that is, $a(u,K,K,0) = R+1$. This implies $A(u,K) \ge R+1$, $B(u,K) = R$ and
    \begin{align*}
        p_u &= \min {\{ b(u,I,K,0), R \}} = b(u,I,K,0) \\
            &\ge b(v,K,K,0) - 1 = q_v - 1.
    \end{align*}
    In both cases we have $p_u \ge q_v - 1$. Therefore
    \begin{align*}
        x_v + x_u
            &\ge \frac{p_v}{p_v + q_v} + \frac{q_v - 1}{(q_v - 1) + (p_v + 1)} \\
            &= 1 - \frac{1}{p_v + q_v}.
    \end{align*}
    As there exists a $(0,K,K,0)$-walk of \Klength{} $p_v + q_v$, Corollary~\ref{cor:short-0KK0-walk} implies that $\myoptutil \le x_v + x_u$.
\end{proof}

\begin{lemma}\label{lem:opt-vpa-vqb}
    If $a(v,K,K,0) \le R$ and $a(v,I,K,0) > R$, then $x_v + x_u \ge \alpha\myoptutil$.
\end{lemma}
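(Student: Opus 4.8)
The plan is to follow the same template as in Lemma~\ref{lem:opt-vpa-vqa}: bound $x_v + x_u$ from below by controlling $p_v$, $q_v$, $p_u$, $q_u$ via the adjacency Lemmata~\ref{lem:I-adj} and~\ref{lem:K-adj}, and then convert the resulting lower bound into a statement about $\myoptutil$ using one of the corollaries in Section~\ref{sec:opt}. Here the hypotheses are $a(v,K,K,0) \le R$ and $a(v,I,K,0) > R$; the first gives $p_v = b(v,I,K,0)$ by~\eqref{pa}, and the second gives $q_v = \min\{b(v,K,K,0), B(v,I)\}$ by~\eqref{qb}. Since $a(v,K,K,0) \le R$ we have $b(v,K,K,0) = a(v,K,K,0)$, a genuine (small) number; and since $a(v,I,K,0) > R$ we have $b(v,I,K,0) = R$, so $p_v = R$. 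Thus $x_v = R/(R + q_v)$ with $q_v \le R$.

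First I would handle $q_v$. Because $a(v,I,K,0) > R$, any $(v,I,\cdot,\cdot)$-walk either never reaches a $0$-vertex within $R$ \Kedges{} or never reaches one at all; in particular I expect $A(v,I) \ge R$, hence $B(v,I) = R$, and therefore $q_v = \min\{a(v,K,K,0), R\} = a(v,K,K,0) =: m$ with $1 \le m \le R$. So $x_v = R/(R+m)$. Now I turn to $x_u$. The key is to get a good lower bound on $p_u$. Since $v$ and $u$ are \Kadjacent{}, Lemma~\ref{lem:K-adj} gives $a(u,I,K,0) \le a(v,K,K,0) + 1 = m + 1 \le R + 1$. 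If $a(u,I,K,0) \le R$ then $p_u$ satisfies~\eqref{pa} and equals $b(u,I,K,0) \ge b(v,K,K,0) - 1 = m - 1$; if instead $a(u,I,K,0) = R+1$ (the only remaining value), then as in Lemma~\ref{lem:opt-vpa-vqa} we get $B(u,I) = R$ and $p_u = \min\{b(u,I,K,0), \dots\} = b(u,I,K,0) \ge m - 1$ again. So in all cases $p_u \ge m - 1 = q_v - 1$, exactly as before, and $q_u$ is irrelevant to the lower bound since $x_u = p_u/(p_u + q_u) \ge (q_v-1)/((q_v-1) + (\text{something} \le p_v + 1))$ needs an upper bound on $q_u$.

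For $q_u$: whether~\eqref{qa} or~\eqref{qb} holds, $q_u \le b(u,K,K,0) \le b(v,I,K,0) + 1 = p_v + 1 = R+1$ by Lemma~\ref{lem:K-adj}. Combining, $x_v + x_u \ge \frac{R}{R + m} + \frac{m-1}{(m-1) + (R+1)} = \frac{R}{R+m} + \frac{m-1}{R+m} = \frac{R + m - 1}{R + m} = 1 - \frac{1}{R+m}$. Since $m \ge 1$, this is $\ge 1 - 1/(R+1)$. To finish I need $\myoptutil \le \frac{R+1}{R+m}\cdot(\text{that bound})$ or, more directly, to exhibit a walk certifying a suitable upper bound on $\myoptutil$. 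The natural certificate is a $(0,K,K,0)$-walk of \Klength{} $p_v + q_v = R + m$ through the $(v,K,K,0)$-walk of length $m$ extended along the \Kedge{} and the length-$R$ $(v,I,K,0)$-walk -- wait, the latter does not reach a $0$-vertex, so that route fails; instead I should use Corollary~\ref{cor:long-walk} on a $(w,I,K,u')$-walk of large \Klength{} coming from $a(v,I,K,0) > R$, giving $\myoptutil \le 1 + 1/R$, which combined with $x_v + x_u \ge 1 - 1/(R+m) \ge 1 - 1/(R+1)$ and $\alpha = 1 - 1/R$ should yield $x_v + x_u \ge \alpha \myoptutil$ after checking $(1 - \tfrac{1}{R+1})/(1 + \tfrac{1}{R}) \ge 1 - \tfrac{1}{R}$, i.e. $\tfrac{R}{R+1}\cdot\tfrac{R}{R+1} \ge \tfrac{R-1}{R}$, which holds for $R \ge 1$. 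The main obstacle is getting the right certificate for the bound on $\myoptutil$: the condition $a(v,I,K,0) > R$ only guarantees long $(v,I,K,\cdot)$-walks or long $(v,I,\cdot,\cdot)$-walks, not a $0$-endpoint, so I must route the argument through Corollary~\ref{cor:long-walk} (using $A(v,I) \ge R$, hence a $(v,I,K,\cdot)$-walk of \Klength{} $> R$... or rather extend to reach some $0$-vertex eventually) rather than Corollary~\ref{cor:short-0KK0-walk}, and then verify the arithmetic inequality between the $x_v + x_u$ lower bound and $\alpha\myoptutil$ carefully for all admissible $m \in \{1,\dots,R\}$.
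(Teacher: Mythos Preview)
Your overall strategy is close to the paper's, but there is a genuine gap at the end and several muddled steps along the way.

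\textbf{The fatal step.} Your final arithmetic check is simply false. You claim
\[
\frac{R}{R+1}\cdot\frac{R}{R+1}\ \ge\ \frac{R-1}{R},
\]
which rearranges to $R^3\ge (R-1)(R+1)^2=R^3+R^2-R-1$, i.e.\ $R^2-R-1\le 0$. This fails for every $R\ge 2$. So the route through Corollary~\ref{cor:long-walk} (using only $\myoptutil\le 1+1/R$) does not close the argument. The point you are missing is that the hypothesis $a(v,K,K,0)\le R$ already exhibits a $0$-vertex incident to a \Kedge{}, and hence $\myoptutil\le 1$. With that bound, even your weaker estimate $x_v+x_u\ge 1-\tfrac{1}{R+m}\ge 1-\tfrac{1}{R}=\alpha$ (for $m\ge 0$) immediately gives $x_v+x_u\ge\alpha\myoptutil$. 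This is exactly what the paper does: it shows $x_v+x_u\ge\alpha$ and uses $\myoptutil\le 1$.

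\textbf{Secondary issues.} Your treatment of $p_u$ is confused. The condition separating~\eqref{pa} from~\eqref{pb} for the vertex $u$ is whether $a(u,K,K,0)\le R$, not whether $a(u,I,K,0)\le R$; and Lemma~\ref{lem:K-adj} gives $a(v,K,K,0)\le a(u,I,K,0)+1$, not the reversed inequality you write. The clean way (as in the paper) is to argue directly that in every case $p_u\ge q_v-1$: if $p_u=b(u,I,K,0)$ then Lemma~\ref{lem:K-adj} gives $b(u,I,K,0)\ge b(v,K,K,0)-1\ge q_v-1$; if $p_u=B(u,K)$ then Lemma~\ref{lem:K-adj} gives $B(u,K)\ge B(v,I)\ge q_v$. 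Your assertion ``$A(v,I)\ge R$'' is also unjustified and false in general (e.g.\ $v$ has a single \Iedge{} to a $0$-vertex, so $A(v,I)=0$ while $a(v,I,K,0)=\infty$); fortunately this does not matter once you use $q_v\le b(v,K,K,0)$ rather than claiming equality. Finally, $q_u\le R$ holds trivially since both $b(\cdot)$ and $B(\cdot)$ are bounded by $R$; you do not need the weaker $q_u\le R+1$.
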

\begin{proof}
    Regardless of whether $q_u$ satisfies~\eqref{qa} or~\eqref{qb}, we have
    \[
        q_u \le R = b(v,I,K,0) = p_v.
    \]
    As for $p_u$, there are three cases. First, if $p_u$ satisfies~\eqref{pa}, we have by Lemma~\ref{lem:K-adj}
    \[
        p_u = b(u,I,K,0) \ge b(v,K,K,0) - 1 \ge q_v - 1.
    \]
    Second, if $p_u$ satisfies~\eqref{pb} and $b(u,I,K,0) < B(u,K)$, we have
    \[
        p_u = b(u,I,K,0) \ge b(v,K,K,0) - 1 \ge q_v - 1.
    \]
    Third, if $p_u$ satisfies~\eqref{pb} and $b(u,I,K,0) \ge B(u,K)$, we have
    \[
        p_u = B(u,K) \ge B(v,I) \ge q_v > q_v - 1.
    \]
    In each case we have $p_u \ge q_v - 1$. Therefore
    \begin{align*}
        x_v + x_u
            &= \frac{R}{R + q_v} + \frac{p_u}{p_u + q_u} \\
            &\ge \frac{R}{R + p_u + 1} + \frac{p_u}{p_u + R}
            \ge 1 - \frac{1}{R}
            = \alpha.
    \end{align*}
    Because $a(v,K,K,0) \le R$, there exists a $0$-vertex incident to a \Kedge{} and therefore $\myoptutil \le 1$.
\end{proof}

\begin{lemma}\label{lem:opt-vpb-vqa-upb-uqa}
    If $a(v,K,K,0) > R$, $a(v,I,K,0) \le R$, $a(u,K,K,0) > R$, and $a(u,I,K,0) \le R$, then $x_v + x_u \ge \myoptutil$.
\end{lemma}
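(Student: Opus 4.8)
The claim concerns a \Kedge{} $\{v,u\}$ where both endpoints are $x$-vertices and both satisfy the "small-$a$" condition for $K$-walks (i.e. $a(\cdot,K,K,0) > R$) and the "small-$a$" condition for $I$-walks (i.e. $a(\cdot,I,K,0) \le R$). So $p_v = \min\{b(v,I,K,0), B(v,K)\}$ by~\eqref{pb}, and $q_v = b(v,K,K,0) = R$ by~\eqref{qa} combined with $a(v,K,K,0) > R$; symmetrically $q_u = R$. The plan is to bound $p_v$ and $p_u$ against each other using the adjacency lemmas and then conclude via a calculation of the same flavour as in Lemma~\ref{lem:opt-vpa-vqa}.

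**Main steps.** First I would record that $q_v = q_u = R$, since this is forced by $a(v,I,K,0) \le R$ triggering~\eqref{qa} and $a(v,K,K,0) > R$ forcing $b(v,K,K,0) = R$. Next, since $v$ and $u$ are \Kadjacent{}, Lemma~\ref{lem:K-adj} gives the inequalities relating the quantities at $v$ and $u$: in particular $b(v,K,K,0) \le b(u,I,K,0) + 1$ (and symmetrically), and $B(v,K) \ge B(u,I)$ (and symmetrically). I would use these to show $p_v \ge q_u - 1 = R - 1$ and $p_u \ge q_v - 1 = R - 1$. For the $b$-component of $p_v$: from $b(u,I,K,0) \ge b(v,K,K,0) - 1 = R - 1$; but actually we want a lower bound on the minimum $\min\{b(v,I,K,0), B(v,K)\}$, so I need both $b(v,I,K,0) \ge R-1$ and $B(v,K) \ge R-1$. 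The first holds because $a(v,I,K,0) \le R$ so $b(v,I,K,0) = a(v,I,K,0)$, but that could be small — so instead I should chase the bound the other direction, bounding $p_u$ from below using the quantities that appear in $p_v$, mirroring the structure of Lemma~\ref{lem:opt-vpa-vqa}. Concretely: I expect to show $p_v \ge R-1$ using $a(v,K,K,0) > R$ which forces $A(v,K) \ge R+1$ hence $B(v,K) = R$, so $p_v = \min\{b(v,I,K,0), R\} = b(v,I,K,0)$; and similarly $p_u = b(u,I,K,0)$. Then Lemma~\ref{lem:K-adj} applied both ways gives $b(u,I,K,0) \ge b(v,K,K,0) - 1 = R - 1$ and $b(v,I,K,0) \ge b(u,K,K,0) - 1 = R - 1$. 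So $p_v \ge R-1$ and $p_u \ge R-1$, while $q_v = q_u = R$.

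**Finishing.** With $p_v \ge R-1$, $p_u \ge R-1$, $q_v = q_u = R$, I compute
\[
  x_v + x_u = \frac{p_v}{p_v + R} + \frac{p_u}{p_u + R}.
\]
Each term is increasing in its numerator, so the minimum over the feasible range of $p_v, p_u$ is attained at $p_v = p_u = R-1$, giving $x_v + x_u \ge \frac{2(R-1)}{2R-1}$. Hmm — this equals $1 - \frac{1}{2R-1}$, which is stronger than $\alpha = 1 - 1/R$ only when... let me reconsider; actually $\frac{2(R-1)}{2R-1} = 1 - \frac{1}{2R-1} \ge 1 - \frac{1}{R} = \alpha$, so in fact $x_v + x_u \ge \alpha$ trivially, but the lemma claims the stronger bound $x_v + x_u \ge \myoptutil$. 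So the real content must be a bound on $\myoptutil$: since $a(v,I,K,0) \le R$ means there is a $(v,I,K,0)$-walk, and combined with the \Kedge{} $\{u,v\}$ extended appropriately we get a $(0,K,K,0)$-walk or invoke Corollary~\ref{cor:short-0KK0-walk}. I would identify the $(0,K,K,0)$-walk of \Klength{} $p_v + q_u$ (or similar) passing through the relevant $0$-vertices, giving $\myoptutil \le 1 - \frac{1}{p_v + q_u}$, and then match this against the lower bound on $x_v + x_u$. The main obstacle will be the bookkeeping: correctly identifying which $0$-vertex and which concatenated walk realises the \Klength{} that makes Corollary~\ref{cor:short-0KK0-walk} yield exactly the bound $1 - \frac{1}{p_v + q_v}$ (by symmetry $= 1 - \frac{1}{p_u + q_u}$), and verifying $x_v + x_u \ge 1 - \frac{1}{p_v + q_v}$ from the fraction inequality — which is precisely the computation already carried out in Lemma~\ref{lem:opt-vpa-vqa} once $p_u \ge q_v - 1$ is in hand. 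So the proof reduces to: establish $q_v = q_u = R$ and $p_v = b(v,I,K,0)$, $p_u = b(u,I,K,0)$; apply Lemma~\ref{lem:K-adj} to get $p_u \ge q_v - 1$; then reuse the final inequality chain of Lemma~\ref{lem:opt-vpa-vqa} verbatim together with Corollary~\ref{cor:short-0KK0-walk} applied to a $(0,K,K,0)$-walk of \Klength{} $p_v + q_v$.
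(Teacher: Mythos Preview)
Your proposal contains two genuine gaps.

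\textbf{First gap:} The implication ``$a(v,K,K,0) > R$ forces $A(v,K) \ge R+1$'' is not valid on its own. If every maximal alternating walk from $v$ starting with a \Kedge{} terminates at a $0$-vertex via an \Iedge{}, one can have $A(v,K)$ small while $a(v,K,K,0) = \infty$. What actually pins down $B(v,K) = R$ is the squeeze $R < a(v,K,K,0) \le a(u,I,K,0) + 1 \le R+1$ coming from Lemma~\ref{lem:K-adj} and the hypothesis on $u$; this forces $a(v,K,K,0) = R+1$ to be finite, whence $A(v,K) \ge R+1$.

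\textbf{Second gap:} The plan to ``reuse the final inequality chain of Lemma~\ref{lem:opt-vpa-vqa} together with Corollary~\ref{cor:short-0KK0-walk} applied to a $(0,K,K,0)$-walk of \Klength{} $p_v + q_v$'' breaks down. In Lemma~\ref{lem:opt-vpa-vqa} one has $q_v = a(v,K,K,0)$, so the concatenation of a shortest $(v,I,K,0)$-walk with a shortest $(v,K,K,0)$-walk has \Klength{} exactly $p_v + q_v$. Here, however, $q_v = R < a(v,K,K,0)$, so that same concatenated walk has \Klength{} $p_v + a(v,K,K,0) > p_v + q_v$; Corollary~\ref{cor:short-0KK0-walk} then only yields $\myoptutil \le 1 - 1/(p_v + q_v + 1)$, which does not match your lower bound $1 - 1/(p_v + q_v)$.

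The paper sidesteps both issues with the same squeeze observation. From $R < a(u,K,K,0) \le a(v,I,K,0)+1 \le R+1$ one obtains $a(v,I,K,0) = R$ \emph{exactly}, hence $b(v,I,K,0) = R$ and $B(u,K) = R$; by symmetry the same holds with $v,u$ swapped. This gives $p_v = p_u = q_v = q_u = R$, so $x_v + x_u = 1$. Since $a(v,I,K,0) \le R$ guarantees a $0$-vertex incident to a \Kedge{}, $\myoptutil \le 1$ and the lemma follows immediately. Your weaker bounds $p_v, p_u \ge R-1$ are correct but not sharp enough to close the argument.
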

\begin{proof}
    By assumption, we have
    \begin{align*}
        q_v = b(v,K,K,0) &= R, \\
        q_u = b(u,K,K,0) &= R.
    \end{align*}
    Lemma~\ref{lem:K-adj} implies $R < a(u,K,K,0) \le a(v,I,K,0) + 1 \le R + 1$; therefore $a(v,I,K,0) = R$. Then there is a $(u,K,K,0)$-walk of \Klength{} $R+1$, which implies $A(u,K) > R$. We have $b(v,I,K,0) = R$ and $B(u,K) = R$. Exchanging the roles of $v$ and $u$, also $b(u,I,K,0) = R$ and $B(v,K) = R$. Therefore
    \[
        p_v = p_u = R.
    \]
    We conclude that $x_v + x_u = 1/2 + 1/2$. Because we have $a(v,I,K,0) \le R$, there exists a $0$-vertex incident to a \Kedge{} and therefore $\myoptutil \le 1$.
\end{proof}

\begin{lemma}\label{lem:opt-vpb-vqa-upb-uqb}
    If $a(v,K,K,0) > R$, $a(v,I,K,0) \le R$, $a(u,K,K,0) > R$, and $a(u,I,K,0) > R$, then $x_v + x_u \ge \myoptutil$.
\end{lemma}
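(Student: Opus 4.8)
The plan is to pin down the exact values of $p_v$, $q_v$, $p_u$, and $q_u$ from the four case hypotheses together with Lemma~\ref{lem:K-adj}, and then to verify $x_v+x_u\ge 1\ge\myoptutil$ by a one-line computation. The engine of the argument is the observation that, although the hypothesis only asserts $a(v,I,K,0)\le R$, in this case it is forced to be exactly~$R$.

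First I would extract this. Since $v$ and $u$ are \Kadjacent{} $x$-vertices, Lemma~\ref{lem:K-adj} applied with the roles of $v$ and $u$ swapped gives $a(u,K,K,0)\le a(v,I,K,0)+1$. Together with the hypotheses $a(u,K,K,0)>R$ and $a(v,I,K,0)\le R$ this forces $a(v,I,K,0)=R$ and $a(u,K,K,0)=R+1$; in particular $a(u,K,K,0)$ is finite, so a $(u,K,K,0)$-walk of \Klength{} $R+1$ exists, whence $A(u,K)\ge R+1$ and $B(u,K)=R$.

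Next I would read off the four values. From $a(v,I,K,0)=R$ we get $b(v,I,K,0)=R$, so~\eqref{qa} (applicable since $a(v,I,K,0)\le R$) gives $q_v=b(v,K,K,0)=R$, using $a(v,K,K,0)>R$, while~\eqref{pb} (applicable since $a(v,K,K,0)>R$) gives $p_v=\min\{R,B(v,K)\}=B(v,K)$. For $u$, the hypothesis $a(u,I,K,0)>R$ gives $b(u,I,K,0)=R$, so~\eqref{pb} gives $p_u=\min\{R,B(u,K)\}=R$ and~\eqref{qb} gives $q_u=\min\{b(u,K,K,0),B(u,I)\}=\min\{R,B(u,I)\}=B(u,I)$. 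Thus $x_v=p_v/(p_v+R)$ and $x_u=R/(R+q_u)$ with $p_v=B(v,K)$ and $q_u=B(u,I)$. Lemma~\ref{lem:K-adj} also gives $B(v,K)\ge B(u,I)$, i.e.\ $p_v\ge q_u$, so
\[
    x_v+x_u=\frac{p_v}{p_v+R}+\frac{R}{R+q_u}\ge\frac{p_v}{p_v+R}+\frac{R}{R+p_v}=1 .
\]
Finally, since $a(v,I,K,0)=R<\infty$ there is a $(v,I,K,0)$-walk, whose last edge is a \Kedge{} incident to a $0$-vertex; hence $\myoptutil\le1$ and $x_v+x_u\ge1\ge\myoptutil$.

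The step I expect to be the crux is the first one: noticing that the conjunction of $a(u,K,K,0)>R$ and $a(v,I,K,0)\le R$, filtered through Lemma~\ref{lem:K-adj}, pins $a(v,I,K,0)$ to exactly~$R$. This is what makes $b(v,I,K,0)=R$ and collapses~\eqref{pb} to $p_v=B(v,K)$; combined with $q_u=B(u,I)$ and $B(v,K)\ge B(u,I)$ it produces the exact cancellation above. If $a(v,I,K,0)$ could be strictly below $R$, then $p_v$ might drop below $q_u$ and the clean bound would fail. Everything else is unwinding the definitions of $p$, $q$, $b$, and $B$, plus the standard observation that a \Kedge{} at a $0$-vertex bounds $\myoptutil$ by~$1$.
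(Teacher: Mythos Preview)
Your proof is correct and follows essentially the same route as the paper: both pin $a(v,I,K,0)$ to exactly $R$ via Lemma~\ref{lem:K-adj} (the paper cites ``the same argument as in the proof of Lemma~\ref{lem:opt-vpb-vqa-upb-uqa}''), read off $q_v=R$, $p_v=B(v,K)$, $p_u=R$, $q_u=B(u,I)$, use $B(v,K)\ge B(u,I)$ to get $x_v+x_u\ge 1$, and finish with $\myoptutil\le 1$ from the existence of a \Kedge{} at a $0$-vertex. The only cosmetic difference is that the paper bounds $p_v/(p_v+R)$ below by $q_u/(q_u+R)$ whereas you bound $R/(R+q_u)$ below by $R/(R+p_v)$; both are equivalent uses of $p_v\ge q_u$.
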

\begin{proof}
    By assumption, we have
    \[
        q_v = b(v,K,K,0) = R.
    \]
    As $b(u,K,K,0) = R$ and $b(u,I,K,0) = R$, we have $p_u = B(u,K)$ and $q_u = B(u,I)$. By the same argument as in the proof of Lemma~\ref{lem:opt-vpb-vqa-upb-uqa}, we can conclude that $b(v,I,K,0) = R$ and $B(u,K) = R$. Therefore
    \[
        p_u = R
    \]
    and $p_v = B(v,K)$. By Lemma~\ref{lem:K-adj},
    \[
        p_v = B(v,K) \ge B(u,I) = q_u.
    \]
    Therefore
    \[
        x_v + x_u
            = \frac{p_v}{p_v + R} + \frac{R}{R + q_u}
            \ge \frac{q_u}{q_u + R} + \frac{R}{R + q_u} = 1.
    \]
    Again, there exists a $0$-vertex incident to a \Kedge{} and therefore $\myoptutil \le 1$.
\end{proof}

\begin{lemma}\label{lem:opt-vpb-vqb-upb-uqb-a}
    If $a(v,K,K,0) > R$, $a(v,I,K,0) > R$, $a(u,K,K,0) > R$, $a(u,I,K,0) > R$, and $A(v,I) \ge R$, then $x_v + x_u \ge \alpha\myoptutil$.
\end{lemma}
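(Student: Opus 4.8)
The plan is to nail down all four of the quantities $p_v,q_v,p_u,q_u$ from the definitions and the five hypotheses, and then simply read off the bound. Because $a(v,K,K,0)>R$ and $a(v,I,K,0)>R$, we have $b(v,K,K,0)=b(v,I,K,0)=R$, so $p_v$ is selected by~\eqref{pb} and $q_v$ by~\eqref{qb}; in each case the inner $\min$ collapses against $R$ (recall $B(\cdot,\cdot)\le R$ always), leaving $p_v=B(v,K)$ and $q_v=B(v,I)$. The extra hypothesis $A(v,I)\ge R$ forces $B(v,I)=R$, hence $q_v=R$. The identical reasoning at $u$ (using $a(u,K,K,0)>R$ and $a(u,I,K,0)>R$) gives $p_u=B(u,K)$ and $q_u=B(u,I)$.

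Next I would push the two pieces of information we have about $v$ across the \Kedge{}~$\{v,u\}$ using Lemma~\ref{lem:K-adj} in both directions: it gives $B(u,K)\ge B(v,I)=R$, so $B(u,K)=R$ and therefore $p_u=R$; and it gives $B(v,K)\ge B(u,I)$, that is, $p_v\ge q_u$. Substituting these into the algorithm's output,
\[
    x_v+x_u=\frac{p_v}{p_v+R}+\frac{R}{R+q_u}\ge\frac{p_v}{p_v+R}+\frac{R}{R+p_v}=1,
\]
where the inequality uses $q_u\le p_v$ together with the monotonicity of $t\mapsto R/(R+t)$. So the algorithm's pair already sums to at least~$1$.

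Finally I would bound $\myoptutil$. The point of the hypothesis $A(v,I)\ge R$ — beyond pinning $q_v$ — is that it produces a $(v,I,\cdot,\cdot)$-walk of \Klength{} at least $R$; truncating this walk immediately after its $R$-th \Kedge{} leaves a $(v,I,K,w)$-walk of \Klength{} exactly $R$, so Corollary~\ref{cor:long-walk} yields $\myoptutil\le 1+1/R$. For $R\ge 2$ this is at most $R/(R-1)=1/\alpha$ (and for $R=1$ there is nothing to prove, since $\alpha=0$), hence $\alpha\myoptutil\le 1\le x_v+x_u$, which is the claim. The one genuinely delicate point — and the reason this subcase deserves its own lemma — is precisely this final balancing act: here, unlike in the subcases where some endpoint sits next to a $0$-vertex, we cannot claim $\myoptutil\le 1$, only $\myoptutil\le 1+1/R$; what rescues the argument is that we simultaneously obtain the stronger conclusion $x_v+x_u\ge 1$ rather than merely $x_v+x_u\ge\alpha$, and checking that $1+1/R\le 1/\alpha$ is exactly where the constant $\alpha=1-1/R$ is calibrated.
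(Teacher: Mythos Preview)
Your proof is correct and follows essentially the same route as the paper: reduce all four of $p_v,q_v,p_u,q_u$ to $B$-values, use Lemma~\ref{lem:K-adj} across the \Kedge{} to compare them, conclude $x_v+x_u\ge 1$, and then invoke Corollary~\ref{cor:long-walk} from the hypothesis $A(v,I)\ge R$ to bound $\myoptutil\le 1+1/R$. The only cosmetic difference is that the paper does not bother to pin down $q_v=R$ and $p_u=R$; it simply uses Lemma~\ref{lem:K-adj} symmetrically to obtain $p_v\ge q_u$ and $p_u\ge q_v$, and then bounds $x_v+x_u\ge q_u/(q_u+q_v)+q_v/(q_v+q_u)=1$ directly, so the hypothesis $A(v,I)\ge R$ is used only for the optimum bound --- but this is a matter of taste, not substance.
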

\begin{proof}
    By assumption, $b(v,K,K,0) = R$, $b(v,I,K,0) = R$, $b(u,K,K,0) = R$, and $b(u,I,K,0) = R$. Therefore $p_v = B(v,K)$, $q_v = B(v,I)$, $p_u = B(u,K)$, and $q_u = B(u,I)$. Lemma~\ref{lem:K-adj} implies
    \begin{align*}
        {p_v = B(v,K)} &\ge {B(u,I) = q_u}, \\
        {p_u = B(u,K)} &\ge {B(v,I) = q_v}.
    \end{align*}
    Therefore
    \[
        x_v + x_u
            \ge \frac{q_u}{q_u + q_v} + \frac{q_v}{q_v + q_u} = 1.
    \]
    As $A(v,I) \ge R$, there exists a $(v,I,K,\cdot)$-walk of \Klength{} at least $R$. By Corollary~\ref{cor:long-walk}, $\myoptutil \le 1 + 1/R$. Therefore $x_v + x_u \ge \alpha \myoptutil$.
\end{proof}

\begin{lemma}\label{lem:opt-vpb-vqb-upb-uqb-b}
    If $a(v,K,K,0) > R$, $a(v,I,K,0) > R$, $a(u,K,K,0) > R$, $a(u,I,K,0) > R$, $A(v,I) < R$, and $A(u,I) < R$, then $x_v + x_u \ge \myoptutil$.
\end{lemma}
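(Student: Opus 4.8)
The plan is to pin down $p_v,q_v,p_u,q_u$ in terms of $\beta_v:=A(v,I)$ and $\beta_u:=A(u,I)$ (both finite, hence attained, and ${<R}$ by hypothesis), lower-bound $x_v+x_u$ by a monotonicity argument, and match this with an upper bound on $\myoptutil$ coming from one long alternating walk through the \Kedge{} $\{v,u\}$.

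Unwinding the definitions: since $a(v,K,K,0)>R$ and $a(v,I,K,0)>R$, rule~\eqref{pb} applies for $p_v$ and~\eqref{qb} for $q_v$, and $b(v,I,K,0)=b(v,K,K,0)=R$, so $p_v=\min\{R,B(v,K)\}=B(v,K)$ and $q_v=\min\{R,B(v,I)\}=B(v,I)=\beta_v$; symmetrically $p_u=B(u,K)$ and $q_u=\beta_u$. Applying Lemma~\ref{lem:K-adj} to the \Kedge{} $\{v,u\}$ gives $A(v,K)\ge A(u,I)+1=\beta_u+1\le R$, hence $p_v=B(v,K)\ge\beta_u+1$, and symmetrically $p_u\ge\beta_v+1$. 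Since $t\mapsto t/(t+c)$ is nondecreasing for $c\ge 0$,
\[
    x_v+x_u=\frac{p_v}{p_v+\beta_v}+\frac{p_u}{p_u+\beta_u}\ge\frac{\beta_u+1}{\beta_u+\beta_v+1}+\frac{\beta_v+1}{\beta_v+\beta_u+1}=1+\frac{1}{\beta_v+\beta_u+1}.
\]

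For the optimum I would exhibit a $(\cdot,I,K,\cdot)$-walk of \Klength{} at least $\beta_v+\beta_u+1$ and apply Corollary~\ref{cor:long-walk}, which then yields $\myoptutil\le 1+1/(\beta_v+\beta_u+1)\le x_v+x_u$. Take $(v,I,\cdot,\cdot)$- and $(u,I,\cdot,\cdot)$-walks $W_v,W_u$ of \Klength{} $\beta_v$ and $\beta_u$; reverse $W_v$, splice in the \Kedge{} $\{v,u\}$, and continue with $W_u$. Alternation holds at both splices (an \Iedge{} of $W_v$ meets $\{v,u\}$, which meets an \Iedge{} of $W_u$) and the \Klength{} is $\beta_v+1+\beta_u$. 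It remains to make the result start with an \Iedge{} and end with a \Kedge{}. If $W_v$ ends with a \Kedge{}, its far endpoint is not a $0$-vertex (else $W_v$ would be a $(v,I,K,0)$-walk of \Klength{} $\beta_v<R<a(v,I,K,0)$, impossible), so it is an $x$-vertex and an \Iedge{} may be prepended there at no cost to the \Klength{}; otherwise $W_v$ already ends with an \Iedge{}. If $W_u$ ends with an \Iedge{} at an $x$-vertex, append a \Kedge{} there (the \Klength{} only increases); if it ends with an \Iedge{} at a $0$-vertex and $\beta_u\ge1$, delete that \Iedge{} to get a $(u,I,K,\cdot)$-walk of the same \Klength{} $\beta_u$; and if it ends with an \Iedge{} at a $0$-vertex with $\beta_u=0$, simply stop the combined walk at $u$, where the last edge is the \Kedge{} $\{v,u\}$.

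The main obstacle is this last piece of bookkeeping: forcing the spliced walk to begin with an \Iedge{} and end with a \Kedge{}. This is exactly where the hypotheses $a(v,I,K,0)>R$ and $a(u,I,K,0)>R$ are used, ruling out short $(\cdot,I,K,0)$-walks that would otherwise land on a $0$-vertex. The definitional unwinding, the monotonicity estimate, and the final comparison are all routine.
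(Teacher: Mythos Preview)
Your proof is correct and follows essentially the same route as the paper: identify $p_v=B(v,K)$, $q_v=A(v,I)$, $p_u=B(u,K)$, $q_u=A(u,I)$; use Lemma~\ref{lem:K-adj} together with $A(u,I)+1\le R$ to get $p_v\ge q_u+1$ and $p_u\ge q_v+1$; deduce $x_v+x_u\ge 1+1/(q_v+q_u+1)$; then splice the two maximal $I$-walks through the \Kedge{} $\{v,u\}$ and invoke Corollary~\ref{cor:long-walk}. The only cosmetic difference is that the paper builds the spliced walk directly as a $(\cdot,K,K,\cdot)$-walk of \Klength{} $q_v+q_u+1$, observes both endpoints are $x$-vertices (from $a(\cdot,I,K,0)>R$), and then prepends a single \Iedge{}---this avoids your case split on how $W_v$ and $W_u$ terminate, but the substance is identical.
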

\begin{proof}
    By assumption, $b(v,K,K,0) = R$, $b(v,I,K,0) = R$, $b(u,K,K,0) = R$, and $b(u,I,K,0) = R$. Therefore $p_v = B(v,K)$, $q_v = B(v,I) = A(v,I)$, $p_u = B(u,K)$, and $q_u = B(u,I) = A(u,I)$. By Lemma~\ref{lem:K-adj}, $A(v,K) \ge A(u,I) + 1$. As $R \ge A(u,I) + 1$, we also have $B(v,K) \ge A(u,I) + 1$. Therefore
    \[
        p_v = B(v,K) \ge q_u + 1.
    \]
    By symmetry,
    \[
        p_u \ge q_v + 1.
    \]
    We conclude that
    \begin{align*}
        x_v + x_u
            &\ge \frac{q_u + 1}{(q_u + 1) + q_v} + \frac{q_v + 1}{(q_v + 1) + q_u} \\
            &= 1 + \frac{1}{q_u + q_v + 1}.
    \end{align*}
    There is a $(\cdot,K,K,\cdot)$-walk of \Klength{} $A(u,I) + A(v,I) + 1 = q_u + q_v + 1$. Because $a(v,I,K,0) > R$ and $a(u,I,K,0) > R$, both endpoints of this walk are $x$-vertices. Hence, the walk can be extended into a $(\cdot,I,K,\cdot)$-walk of the same \Klength{}. By Corollary~\ref{cor:long-walk}, ${x_v + x_u \ge \myoptutil}$.
\end{proof}

\begin{corollary}\label{cor:opt}
    If $v$ and $u$ are \Kadjacent{} $x$-vertices, then $x_v + x_u \ge \alpha\myoptutil$.
\end{corollary}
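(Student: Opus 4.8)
The plan is to obtain Corollary~\ref{cor:opt} from Lemmata~\ref{lem:opt-vpa-vqa}--\ref{lem:opt-vpb-vqb-upb-uqb-b} by an exhaustive case analysis over the four Boolean conditions ``$a(v,K,K,0)\le R$'', ``$a(v,I,K,0)\le R$'', ``$a(u,K,K,0)\le R$'' and ``$a(u,I,K,0)\le R$''. These conditions decide which of \eqref{pa}/\eqref{pb} and \eqref{qa}/\eqref{qb} are in force at $v$ and at $u$, and hence which lemma applies. The observation that keeps the bookkeeping small is that the \Kedge{} $\{v,u\}$ is an unordered pair and the quantities $p_w$, $q_w$, $x_w$ depend only on the vertex $w$, so both the conclusion $x_v+x_u\ge\alpha\myoptutil$ and every case hypothesis are invariant under exchanging $v$ and $u$. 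Thus whenever a configuration of the four conditions is not literally the hypothesis of one of the lemmas, relabelling $(v,u)\mapsto(u,v)$ turns it into one.

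Concretely, I would first branch on whether $a(v,K,K,0)\le R$. If so, then Lemma~\ref{lem:opt-vpa-vqa} (in the case $a(v,I,K,0)\le R$) or Lemma~\ref{lem:opt-vpa-vqb} (in the case $a(v,I,K,0)>R$) already yields $x_v+x_u\ge\alpha\myoptutil$ -- indeed at least $\myoptutil$ in the first case -- using only that $u$ is an \Kadjacent{} $x$-vertex. If instead $a(v,K,K,0)>R$, I may assume $a(u,K,K,0)>R$ as well, since otherwise swapping $v$ and $u$ reduces to the previous case. Under $a(v,K,K,0)>R$ and $a(u,K,K,0)>R$ I branch on $a(v,I,K,0)$ and $a(u,I,K,0)$: if both are $\le R$, Lemma~\ref{lem:opt-vpb-vqa-upb-uqa} applies; if exactly one is $\le R$, Lemma~\ref{lem:opt-vpb-vqa-upb-uqb} applies after possibly swapping so that it is the one at $v$; and if both are $>R$, I split once more, using Lemma~\ref{lem:opt-vpb-vqb-upb-uqb-a} when $A(v,I)\ge R$ (and, after a swap, when $A(u,I)\ge R$) and Lemma~\ref{lem:opt-vpb-vqb-upb-uqb-b} in the only remaining case $A(v,I)<R$ and $A(u,I)<R$. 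Since the alternatives in each split are mutually exhaustive, the corollary follows.

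I do not expect a real obstacle: the mathematical content lives in the individual lemmas, and this step is pure case management. The points I would be most careful about are (i) that Lemmata~\ref{lem:opt-vpa-vqa} and~\ref{lem:opt-vpa-vqb} genuinely impose no condition on the walk-lengths at $u$ beyond $u$ being an \Kadjacent{} $x$-vertex, which is what lets the first branch close at once; (ii) that each use of the $v\leftrightarrow u$ symmetry is legitimate, which it is because ``$u$ is \Kadjacent{} to $v$'' is a symmetric relation and none of the quantities involved reference an ordering of the \Kedge{}; and (iii) that in the deepest case the dichotomy ``$A(v,I)\ge R$'' versus ``$A(v,I)<R$ and $A(u,I)<R$'' leaves no gap -- if the first alternative fails then $A(v,I)<R$, and testing $A(u,I)$ against $R$ then puts us into either the swapped first alternative or the second.
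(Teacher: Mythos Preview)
Your proposal is correct and is exactly the approach the paper takes: the paper's proof simply reads ``Apply Lemmata~\ref{lem:opt-vpa-vqa}--\ref{lem:opt-vpb-vqb-upb-uqb-b} and the symmetry of $v$ and~$u$.'' Your write-up just makes the case tree and the uses of the $v\leftrightarrow u$ swap explicit, and the exhaustiveness checks you flag in (i)--(iii) are all valid.
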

\begin{proof}
    Apply Lemmata~\ref{lem:opt-vpa-vqa}--\ref{lem:opt-vpb-vqb-upb-uqb-b} and the symmetry of $v$ and~$u$.
\end{proof}

\newpage

\section*{Acknowledgements}

We thank Valentin Polishchuk for comments and discussions. This research was supported in part by the Academy of Finland, Grants 116547 and 117499, and by Helsinki Graduate School in Computer Science and Engineering (Hecse).

\providecommand{\noopsort}[1]{}

\end{document}